\newtheorem{theorem}{Theorem}[section]
\newtheorem{lemma}{Lemma}[section]
\begin{document}

\title{Efficient stochastic sampling of first-passage times with applications to self-assembly simulations}

\author{Navodit Misra}
\affiliation{Department of Physics,
Carnegie Mellon University, Pittsburgh, PA 15213}
\author{Russell Schwartz}
\affiliation{Department of Biological Sciences,
Carnegie Mellon University, Pittsburgh, PA 15213}
\begin{abstract}

Models of reaction chemistry based on the stochastic simulation
algorithm (SSA) have become a crucial tool for simulating complicated
biological reaction networks due to their ability to handle extremely
complicated networks and to represent noise in small-scale chemistry.
These methods can, however, become highly inefficient for stiff
reaction systems, those in which different reaction channels operate
on widely varying time scales. In this paper, we develop two methods
for accelerating sampling in SSA models: an exact method and a scheme
allowing for sampling accuracy up to any arbitrary error bound.  Both
methods depend on analysis of the eigenvalues of continuous time
Markov models that define the behavior of the SSA.  We show how each
can be applied to accelerate sampling within known Markov models or to
sub-graphs discovered automatically during execution.  We demonstrate
these methods for two applications of sampling in stiff SSAs that are
important for modeling self-assembly reactions: sampling breakage
times for multiply-connected bond networks and sampling assembly times
for multi-subunit nucleation reactions.  We show theoretically and
empirically that our eigenvalue methods provide substantially reduced
sampling times for a large class of models used in simulating
self-assembly. These techniques are also likely to have broader use
in accelerating SSA models so as to apply them to systems and
parameter ranges that are currently computationally intractable.
\end{abstract}

\maketitle

\section{Introduction}

Stochastic simulation methods have become increasingly widespread as a
means of simulating and analyzing biochemical reaction
kinetics\cite{Rao02}. The chemical master equation, which governs the
reaction kinetics for well-mixed systems, forms the basis for the
stochastic simulation algorithm (SSA), proposed by Gillespie
\cite{Gill76,Gill77}. SSA models a reaction system as a Continuous
Time Markov Model (CTMM) in which states of the system are defined by
counts of reactants present at a given point in time and transitions
between states correspond to individual reaction events. This SSA
approach is valuable in part because it provides a model of reaction
noise, which can become significant for reaction networks on cellular
scales~\cite{Ark97}. Furthermore, SSA models can provide significant
computational advantages over continuum models for networks
characterized by extremely large sets of possible reaction
intermediates.  The computational value of the SSA approach lies in the fact that
for a large class of networks, the random walk visits only a small
fraction of the state space before equilibrium is established.  As a
result, kinetics on complicated networks can be simulated ``on the fly,'' 
requiring explicit construction of the CTMM network only in the immediate
vicinity of those states visited on a given trajectory. This property is
an essential requirement for any feasible simulation algorithm, since
the size of the state space describing the master equation is
astronomical even for modest system sizes. Successful applications of
SSA include gene regulatory networks~\cite{Ark97} and self-assembly of
complicated structures, such as virus capsids \cite{Ber94,Zhang06}.
Furthermore, the SSA approach has now been adopted by several
approaches for whole-cell modeling~\cite{StochSim,E-Cell3} and
modeling generic complex reaction networks~\cite{BioNetGen,Moleculizer}.

The relaxation time of the SSA can, however, be extremely sensitive to
the transition rates controlling the reaction kinetics. A pure SSA
model has difficulty with stiff reaction systems, i.e., those where
important events occur in parallel on very different time
scales.  In such cases, a simulation can become bogged down by
sampling fast events to the exclusion of the slow events.  Hybrid
discrete/stochastic models~\cite{Gill01,Hasel02,Cer02} can
resolve this problem in some domains, but not when the fast reactions
make use of too many intermediates to allow them to be modeled
continuously. One important example of such a stiff reaction system
is the breaking of bond networks, where individual bonds may break and
repair repeatedly before a sufficiently large bond group is broken to
fracture the network.  Another form of stiff SSA network occurs near
the critical concentration of a self-assembly system, where high-order
nucleation events can be orders of magnitude slower than individual
binding reactions.  In these stiff systems, an SSA model can become
``trapped'' for many steps in a small subset of the state space,
resulting in negligible simulation progress for long periods of time.

\begin{figure}
\includegraphics[scale=0.8]{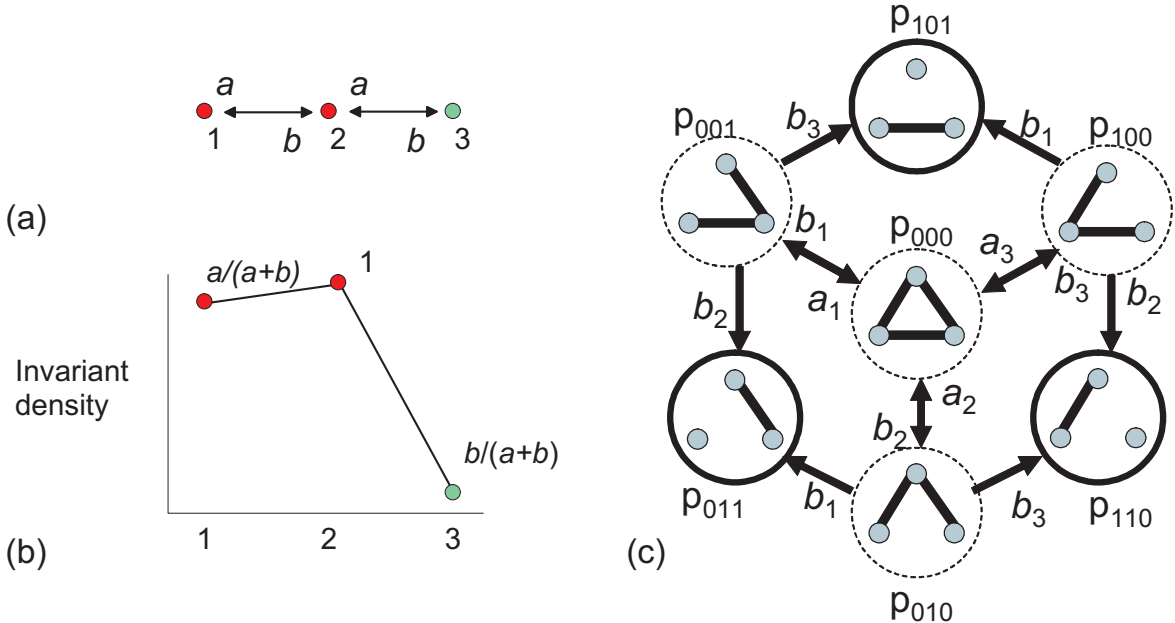}
\caption{Illustration of trapped subgraphs in SSA models. (a) A simple
CTMM on a 3-path with transition rates $a$ and $b$. (b) The
probability landscape for the model. SSA is slow whenever the
invariant density for the corresponding Markov chain is irregular. Here, the SSA takes
$O(a/b)$ steps to reach vertex 3. (c) CTMM model of a
trimer assembly system with three subunits. Graph of possible
configurations joined by reaction rates.  States in which the trimer
is broken are surrounded by solid lines and others by dashed lines.}
\label{SSA}
\end{figure}

To understand these ``trapped'' systems, it is useful to consider the
graph theoretic representation of the SSA method.  An SSA model is
represented by a graph in which each node corresponds to one possible
state of the full model.  Edges connect nodes whose states can be
reached from one another by a single reaction event, e.g., two
molecules binding to one another.  At each simulation step, SSA
considers only the immediate neighbors of the current state.  As a
result, the simulation is prone to traps that can result from
irregularities in the invariant density of the embedded Markov chain
(EMC) implemented by SSA for a given CTMM.  For example, consider a
3-state CTMM represented by a simple path (Fig.~\ref{SSA}(a)), where
the backward transition rate $a$ is much larger than the forward
transition rate $b$.  The average number of SSA steps to reach state 3
from initial state 1 is $O(a/b)$ because once SSA visits state 2, it
will jump to state 3 only a $b/a$ fraction of the time. Nodes 1 and 2
collectively define a trapped subgraph from which the model must
escape. In general, for an $N$-path, where each forward rate $b$ is
smaller than the backward rate $a$, SSA takes $O(a/b)^{N-2}$ steps to
traverse the path (see Theorem \ref{SSAcon} for an analogous
problem). One way such a trapped subgraph can arise in a physical
system is through models of the breakage of bond networks.
Fig.~\ref{SSA}(c) shows the graph arising from a model of the breakage
of a three-cycle bond network, which behaves similarly to the 3-state
CTMM by establishing a trapped inner graph of four states --- the
unbroken state and three states with a single broken bond --- from
which the model must escape to reach any broken network state.  We can
alternatively understand the trapping problem in terms of a
probability landscape view of a reaction system.  The SSA is sluggish
whenever its equilibrium landscape is irregular, consisting of valleys
and hills. The broader and deeper these are, the slower SSA becomes.

To overcome the presence of traps or landscape irregularity, we
propose two \emph{non-local} simulation algorithms that rely on the
spectral decomposition of the Kolmogorov matrix (for a CTMM) or the
transition matrix (for the Embedded Markov Chain (EMC)). These
eigenvalues and their associated eigenvectors describe global modes of
relaxation of the full graph or any of its sub-graphs. Since
eigenvalues are global properties of a graph, spectral methods are
much less sensitive to local landscape traps.  These methods can be
applied to quickly sample first passage times on small CTMM graphs
such as those in Fig.~\ref{SSA} or to sample escape times from trapped
subgraphs when the full model is prohibitively large.

Previous attempts at simulating rare events include the Forward Flux
Sampling (FFS) technique of Allen {\it et al.}\cite{All05} and related
methods \cite{Crooks01,Del02}. The approach breaks a rare event into a
series of relatively more probable stages and uses estimates of
waiting times for the successive stages to develop an aggregate
transition rate for the full event.  This aggregate rate can then be
used to approximate the first passage time density as a single
exponential random variable.  However, while the exponential tail
dominates the density for stiff systems and is therefore a highly
accurate approximation in many cases, the true probability density has
a peak at short times followed by a mixed exponential tail. The
methods developed in the present work, by contrast to the FFS-like
methods, sample first passage times from the entire density to within
arbitrary an error bound.  Recently, another method called the
slow-scale SSA was proposed by Cao {\it et al.}\cite{Cao04,Cao05},
which relies on a technique called the Partial Equilibrium
Approximation (PEA).  PEA essentially assumes that the set of fast
reactions are always in equilibrium and the method approximates
transition rates between slowly varying reactant species by their
expected value in the partial equilibrium state. While these methods
can provide significant benefits for some CTMMs, there are several
limitations in using PEA or similar approximations for arbitrary
graphs. First, a clear distinction between fast and slow species may
not be obvious in a given problem.  For example, in rule-based
simulation of bond networks, stiffness is built in through the
association/dissociation rates of individual bonds rather than being
species dependent. Secondly, these methods always need to be
supplemented with approximations involved in computing the mean values
of the reaction propensities. Furthermore, PEA will be inaccurate
whenever fluctuations in the reaction propensities within the partial
equilibrium state are comparable to their mean values.

The goal of the present work is to develop efficient methods for some
important classes of stiff SSA model for which the above techniques
are unsuitable, with a particular emphasis on models important to
simulations of self-assembly reactions.  The methods proposed in this
paper can be applied to Markov processes on arbitrary
graphs. Furthermore, they can be made accurate to within arbitrary
error bounds.  The remainder of this paper is organized as follows:
Section \ref{notation} sets up some basic notation and a description
of the sampling problem for general CTMM. In Section \ref{spec1} we
introduce a spectral method which relies on the eigen decomposition of
the master equation describing the CTMM.  We use a complete spectral
decomposition of the first passage time density and rejection sampling
to return sample first passage times for arbitrary CTMMs. In section
\ref{spec2} we introduce another spectral method which works as a
hybrid between the purely local SSA and the completely nonlocal Master
Equation method. The latter method proceeds by adaptively constructing
a basis in which to simulate the Markov chain until the system state
has relaxed to its slowest eigenvector. If first-passage out of the
trapped subgraphs does not occur by that time, we use the appropriate
eigenvalue to sample the time to first-passage as an exponential
random variable. In section \ref{AD} we introduce a method for
automated discovery of trapped regions in stiff Markov models. This
technique allows efficient implementation of spectral methods for
large state spaces by isolating regions repeatedly visited by a given
random trajectory and using spectral sampling to escape any such
subgraph. In section \ref{BondNet} we present theoretical results on
the time complexity of SSA for bond networks followed by experiments
on some special classes of bond networks to compare the simulation
efficiency of each method discussed. In Section \ref{NuclLim} we
evaluate the automated discovery variants of the method by applying
them to models of a nucleation-limited assembly system with a state
space too large to explicitly construct. Section \ref{discussion}
concludes the paper with a discussion of results and directions for
future research.

\section{Theory}
 \subsection{\label{notation}The chemical master equation and the stochastic simulation algorithm}
The SSA identifies reaction kinetics for networks of biochemical
subunits as a Markov process governed by an appropriate
\emph{Chapman-Kolmogorov} equation or, equivalently, its differential
version - the master equation. Let $S=\{1,2,\ldots,N_S\}$ be the state space
 for the CTMM, each node representing a possible state for the
simulated system. The time evolution of probability densities is
governed by a \emph{Kolmogorov} matrix $W$, which specifies the
transition rates $W_{nm}$ from the state $m$ to $n$.
\begin{equation}
\frac{dp_{n}}{dt}=\sum_{m \in S} W_{nm}p_{m}(t) - W_{mn} p_{n}(t)
\end{equation}
where, $p_{n}(t)$ denotes the probability to be in state $n$ at time
$t$. The matrix elements $W_{nm}$ satisfy two necessary conditions:
\begin{enumerate}
\item $W_{nm} \geq 0$ for $n \neq m$.
\item $\sum_{m} W_{nm} = 0$.
\end{enumerate}
Under these conditions, it is well known that the matrix has a steady
state solution $|\Pi\rangle = \sum_{n}\pi_{n}|n\rangle$ that is an
eigenvector of $W$ with eigenvalue zero and that all initial
distributions relax to $|\Pi\rangle$ in the limit of long
times~\cite{vanKampen}. In addition, we will require $W$ to satisfy
the \emph{detailed balance} condition, which states that at
equilibrium, the sum of probability current exchanged between any pair
of states $(n,m)$ is zero, i.e., $W_{nm}\pi_{m}=W_{mn}\pi_{n}$. This
in turn allows one to define a scalar product on the state space such
that $W$ is \emph{self-adjoint}:
\begin{equation}
\langle n|m \rangle \equiv \delta_{nm}\frac{1}{\pi_{m}}
\end{equation}
This condition ensures that we can construct an orthogonal
\emph{eigenbasis} and compute time evolved versions of any given
initial probability distribution using spectral decomposition.

\subsection{\label{spec1}Spectral Sampling 1: Master Equation approach}
Given a Kolmogorov matrix $W$ on a state space $S$ and an arbitrary initial
state $i \in V \subset S$, the first-passage time $T_{F}(i)$ is a random
variable which gives the time at which the trajectory first reaches
any state in some subset of the state space $F = S-V$.
The standard method of solving a first passage problem is to set up
the master equation for $V$ with an absorbing boundary over $F$
(zero Dirichlet boundary condition)~\cite{vanKampen}.  Let $P_{V}$ be a projection operator onto the subspace
$V$ and let $N$ be the cardinality of $V$. Then,
$M=P_{V}WP_{V}$ is the effective Kolmogorov matrix that governs time
evolution over $V$. From detailed balance, $M$ is self-adjoint
over $L^{2}_{\pi^{-1}}$. Hence, the eigenvectors of $M$ form a complete
basis $\{|\psi_{\alpha}\rangle\}$. A consequence of the
spectral theorem is the completeness relation for the properly
normalized eigenbasis, i.e.,
$\langle\psi_{\alpha}|\psi_{\beta}\rangle=\delta_{\alpha\beta}$
. Given any vector $|\eta\rangle$:
\begin{equation}
|\eta\rangle = \sum_{\alpha=1}^{N} |\psi_{\alpha}\rangle\langle\psi_{\alpha}|\eta\rangle
\end{equation}

\subsubsection{Spectral decomposition of the first-passage time distribution}
 In terms of the vertex set basis, the completeness relation over $L^{2}_{\pi^{-1}}$ is $I=\sum_{n\in V}P_{n}$, where $P_{n}=\pi_{n}|n\rangle\langle n|$ is the projector onto vertex state $|n\rangle$. Given an initial probability density $p_{n}(t=0)=\delta_{ni}$ the probability for state $n\in V$ evolves as:
\begin{eqnarray}
p_{n}(t)|n\rangle&=& P_{n} e^{t M} |i\rangle= \pi_{n}|n\rangle\langle n| \sum_{\alpha=0}^{N} \langle \psi_{\alpha}|i\rangle e^{-\lambda_{\alpha}t}| \psi_{\alpha}\rangle \nonumber\\
\Rightarrow p_{n}(t)&=&\sum_{\alpha=1}^{N}\pi_{n}\psi_{\alpha,n}\psi_{\alpha,i}\exp[-\lambda_{\alpha}t]
\end{eqnarray}
The transition to an element $f\in F$ outside of $V$ , is governed by the following equation:
\begin{eqnarray}
\frac{dp_{f}}{dt}&=& \pi_{f}\langle f|(W-M)\sum_{n\in V}p_{n}(t)|n\rangle\nonumber\\
&=& \sum_{\alpha=1}^{N}c_{\alpha,f}\exp[-\lambda_{\alpha}t]
\end{eqnarray}
The probability for a first passage to the state $f$ between $t$ and $t + dt$ is hence  given by $\rho( T_{f}=t)dt = \sum_{\alpha=1}^{N}c_{\alpha,f} e^{-\lambda_{\alpha}t}dt$.

\subsubsection{Exact sampling for the first-passage time distribution}
In this section we describe a method for returning a sample time from
the computed first-passage density $\rho(t)=\sum_{i=1}^{N}c_{i}
e^{-\lambda_{i}t}$ to any state $f\in F$. A general method for sampling
from complicated distributions is to use the method of rejection
sampling, which first chooses a random variable from a convenient
envelope density and accepts or rejects the sample based on a second
 random sample that depends on the tightness of the envelope fit. The
  rejection rate is low if the envelope curve closely approximates the
 given curve. A simple envelope curve is provided by a pure exponential 
 of the most slowly decaying eigenvalue, with a coefficient equal to the
  sum of all positive terms $\sum_{c_{i}>0} c_{i}$ in the computed density 
$\rho(t)$. However, there is no guarantee that the rejected part is small.
 Since each eigen mode encloses an area $c_{i}/\lambda_{i}$, cancellations
 between near-degenerate eigenvalues can in principle lead to a high
 rejection ratio. We therefore present a method for choosing an envelope
 curve $g(t)$ which eliminates these cancellations. Furthermore, in section \ref{MEnet} we show 
 that the envelope curve is exact for bond networks generated by cycle graphs
$C_{N}$. We sample from $g(t)$ using a decomposition into a discrete
mixture of densities $f_{\alpha}(t) = d_{\alpha}(e^{-\lambda_{\alpha}
t}-e^{-\lambda_{\alpha + 1}t})$ and an efficient rejection step. Here $d_{\alpha}$ are constants, one for each component $f_{\alpha}$ of the envelope curve $g(t)$. The next theorem proves that the density $f_{\alpha}(t)$ can be sampled efficiently using a rejection method.

\begin{figure}
\begin{center}
\includegraphics[scale=0.9]{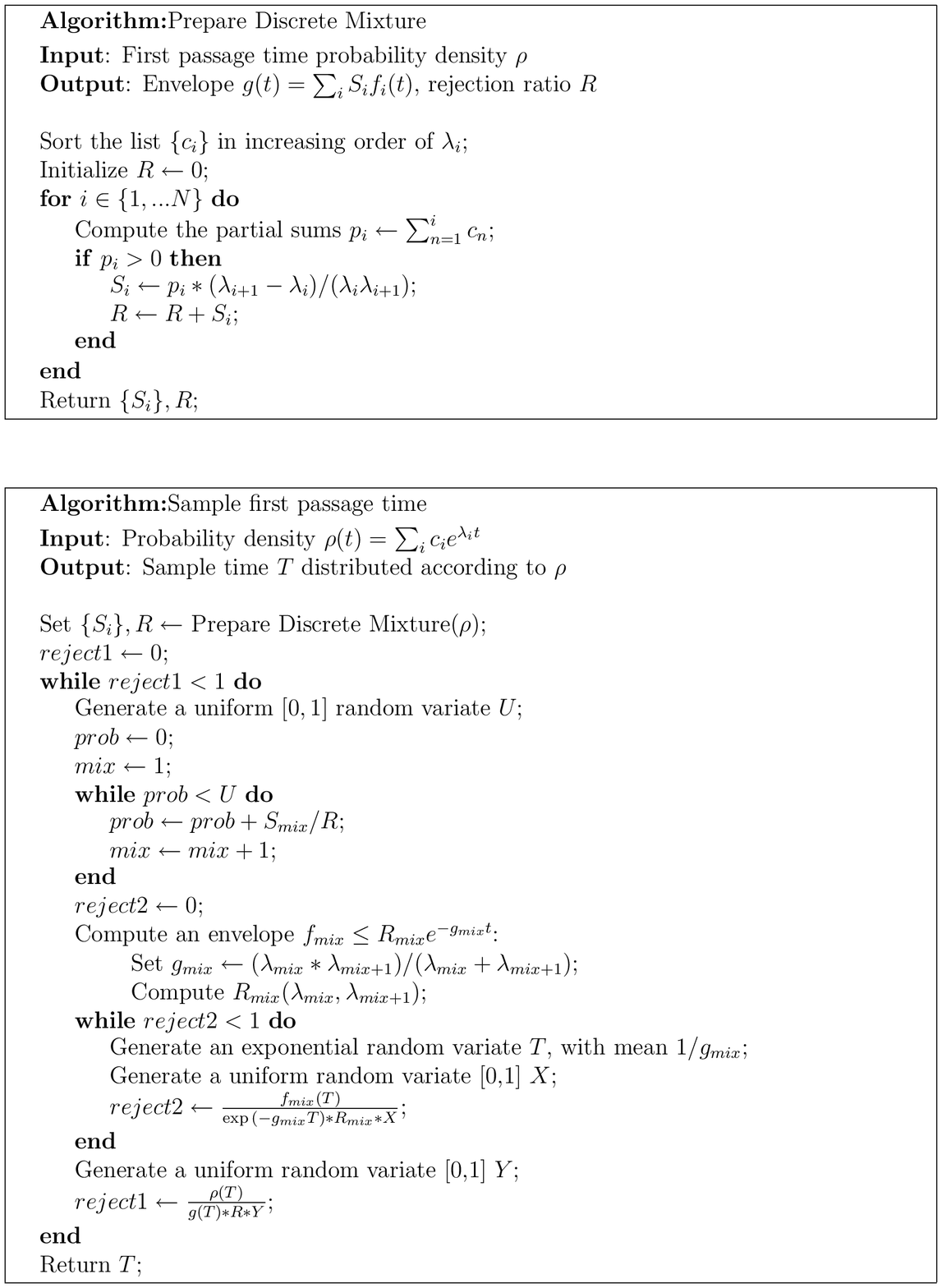}
\caption{Pseudocode for spectral method 1}
\end{center}
\end{figure}

\begin{theorem}\label{RRatio}
The expected rejection ratio for $f_{\alpha}(t)$ is bounded from above by 1.5.
\end{theorem}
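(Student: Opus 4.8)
The plan is to bound the rejection sampler for $f_\alpha$ by exhibiting an explicit, easily-sampled envelope density whose total mass (which dominates the expected number of proposals per accepted sample) is provably below $3/2$. Write $\lambda=\lambda_\alpha$, $\mu=\lambda_{\alpha+1}$, where necessarily $0<\lambda<\mu$ (otherwise $f_\alpha$ would not be a nonnegative density), and set $c:=\mu-\lambda>0$; normalizing $f_\alpha$ to a probability density fixes $d_\alpha=\lambda\mu/(\mu-\lambda)$, so $f_\alpha(t)=d_\alpha e^{-\lambda t}\bigl(1-e^{-ct}\bigr)$. The key observation is the pair of elementary bounds $1-e^{-ct}\le ct$ and $1-e^{-ct}\le 1$, whence $1-e^{-ct}\le\min(ct,1)$. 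This produces the envelope
\[
\tilde g(t)=d_\alpha\,e^{-\lambda t}\,\min(ct,1)\ \ge\ f_\alpha(t),
\]
which is proportional to $te^{-\lambda t}$ on $[0,1/c]$ and to $e^{-\lambda t}$ on $[1/c,\infty)$. After normalization $\tilde g$ is a two-component mixture: with the appropriate probabilities one draws either from a $\mathrm{Gamma}(2,\lambda)$ law truncated to $[0,1/c]$, or from $1/c+\mathrm{Exp}(\lambda)$, each obtained in $O(1)$ time by closed-form inversion. Hence the rejection step is efficient, and its expected number of proposals equals $M:=\int_0^\infty\tilde g(t)\,dt$.

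Next I would evaluate $M$ in closed form. The two pieces contribute $d_\alpha c\int_0^{1/c}te^{-\lambda t}\,dt$ and $d_\alpha\int_{1/c}^\infty e^{-\lambda t}\,dt$; one integration by parts cancels the two $e^{-\lambda/c}/\lambda$ terms and leaves
\[
M=d_\alpha\,\frac{c\,\bigl(1-e^{-\lambda/c}\bigr)}{\lambda^{2}}=\frac{\mu}{\lambda}\,\bigl(1-e^{-\lambda/c}\bigr)=\Bigl(1+\tfrac1u\Bigr)\bigl(1-e^{-u}\bigr),\qquad u:=\frac{\lambda}{\mu-\lambda}>0 .
\]
So $M$ depends only on the dimensionless ratio $u$, and the theorem reduces to the single-variable inequality $M(u)=(1+1/u)(1-e^{-u})\le 3/2$ for all $u>0$.

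That inequality is equivalent to $(1-\tfrac u2)e^{u}\le u+1$. For $u\ge 2$ its left-hand side is $\le 0<u+1$. For $0<u<2$ put $h(u)=u+1-(1-\tfrac u2)e^{u}$; then $h(0)=0$, $h'(0)=\tfrac12>0$, and $h''(u)=\tfrac u2 e^{u}\ge0$, so $h'$ is nondecreasing and stays positive and $h$ is increasing with $h(u)>0$ on $(0,2)$. This gives $M(u)<3/2$ for every $u>0$; in fact $\sup_u M(u)\approx 1.30$ (attained near $u\approx1.8$), so $3/2$ is a safe bound. I expect the one nontrivial step to be guessing the envelope $d_\alpha e^{-\lambda t}\min(ct,1)$, which matches both the small-$t$ (Gamma-like) and large-$t$ (exponential) regimes of $f_\alpha$ and is what makes the bound uniform in $\lambda,\mu$; given that, the remainder is a routine integral plus a two-line convexity argument. (As a remark, $f_\alpha$ is also the density of a sum of two independent exponential variables and could in principle be sampled exactly, but the envelope above is preferable because it degrades gracefully as $\mu\to\lambda$, where $d_\alpha\to\infty$.)
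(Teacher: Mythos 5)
Your proof is correct, but it takes a genuinely different route from the paper's. The paper keeps a single pure exponential as the envelope, $h_\alpha(t)=C\lambda e^{-\lambda t}$ with $\lambda=\lambda_\alpha\lambda_{\alpha+1}/(\lambda_\alpha+\lambda_{\alpha+1})$, chosen tangent to $f_\alpha$ at its inflection point $t_*$; the rejection constant then works out to $C=\bigl[(\lambda_\alpha+\lambda_{\alpha+1})/\lambda_{\alpha+1}\bigr]^2\exp\bigl[-\lambda_\alpha^2 t_*/(\lambda_\alpha+\lambda_{\alpha+1})\bigr]$, which a one-variable monotonicity argument bounds by $4/e\approx1.47$. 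You instead dominate $1-e^{-ct}$ by $\min(ct,1)$, getting the piecewise envelope $d_\alpha e^{-\lambda t}\min(ct,1)$, and your computation of its mass $M(u)=(1+1/u)(1-e^{-u})$, the reduction to $(1-u/2)e^{u}\le u+1$, and the convexity argument for $0<u<2$ are all correct; your constant ($\sup_u M\approx1.30$) is in fact tighter than the paper's $4/e$, and your bound degrades gracefully as $\lambda_{\alpha+1}\to\lambda_\alpha$ just as the paper's does. What the paper's choice buys is an envelope that is a single exponential, sampled by one trivial inversion; your envelope is a two-component mixture whose first piece (a $\mathrm{Gamma}(2,\lambda)$ law truncated to $[0,1/c]$) does \emph{not} admit elementary closed-form inversion --- its CDF $1-(1+\lambda t)e^{-\lambda t}$ must be inverted via Lambert~$W$ or numerically, or the truncation handled by a secondary rejection. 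This is a minor practical caveat rather than a gap: the expected number of proposals per accepted sample equals $M$ regardless of how the envelope itself is generated, so your bound of $3/2$ on the rejection ratio stands; just soften the claim of ``closed-form inversion'' or note the Lambert-$W$/numerical step explicitly.
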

\begin{proof}
We will use a simple exponential $h_{\alpha}(t)= C \lambda\exp[-\lambda t]$ as
the envelope function. In order to minimize $C$ we choose $h_{\alpha}(t)$ such
that \begin{equation}h_{\alpha}(t_{*})=f_{\alpha}(t_{*})\end{equation} and
\begin{equation}\left.\frac{dh_{\alpha}}{dt}\right|_{t=t_{*}}=\left.\frac{df_{\alpha}}{dt}\right|_{t=t_{*}}\end{equation}
where $t_{*}$ is defined implicitly by the condition
\begin{equation}\left.\frac{d^{2}f_{\alpha}}{dt^{2}}\right|_{t=t_{*}}=0
\end{equation} These constraints yield a unique solution
$t_{*}=2\frac{ln(\lambda_{\alpha+1}/\lambda_{\alpha})}{\Delta\lambda}$. Since
$\frac{d^{2}f_{\alpha}}{dt^{2}} > 0$ for $t>t_{*}$, the slope of
$\ln[f_{\alpha}]$ monotonically increases to $-\lambda_{\alpha}$ as
$t\rightarrow\infty$. The corresponding envelope rate then satisfies $\lambda=
\frac{\lambda_{\alpha}\lambda_{\alpha+1}}{\lambda_{\alpha}+
\lambda_{\alpha+1}} \leq \lambda_{\alpha}$. The rejection ratio is
given by:
\begin{eqnarray}
C&=&\frac{1}{\lambda}\frac{\lambda_{\alpha}\lambda_{\alpha+1}}{\lambda_{\alpha+1}-\lambda_{\alpha}}\exp[(\lambda-\lambda_{\alpha})t_{*}]\left(1-\left(\frac{\lambda_{\alpha}}{\lambda_{\alpha+1}}\right)^{2}\right)\nonumber\\
&=& \left[\frac{\lambda_{\alpha}+
\lambda_{\alpha+1}}{\lambda_{\alpha+1}}\right]^{2}\exp\left[-\frac{\lambda_{\alpha}^{2}}{\lambda_{\alpha}+
\lambda_{\alpha+1}}t_{*}\right]\leq
4\exp{\left[2\frac{x^{2}\ln{[x]}}{(1-x^{2})}\right]}
\end{eqnarray}
where $x=\lambda_{\alpha}/\lambda_{\alpha+1}\in (0,1)$. To upper-bound $C$, note that the exponent increases monotonically with $x$ and its maximum is $\lim_{x\rightarrow 1-}(2x^{2}\ln{[x]})/(1-x^{2}) = -1$. This bound finally gives us $C\leq 4/e \approx 1.47$.
\end{proof}

As a final comment, we note that for general graphs the average time complexity of this algorithm is dominated by the computation of the eigenvectors and eigenvalues, which gives us the following theorem:
\begin{theorem}
The average time complexity for spectral decomposition of the master equation is $O(N^{3})$ for a graph of $N$ vertices~\cite{Press}.
\end{theorem}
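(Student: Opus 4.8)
The plan is to show that the eigendecomposition step dominates every other operation in Spectral Method~1 and then invoke the standard cost of a dense symmetric eigensolver. First I would observe that the only object that must be diagonalized is the effective Kolmogorov matrix $M = P_V W P_V$, an $N\times N$ matrix that is self-adjoint with respect to the weighted inner product $L^2_{\pi^{-1}}$. A similarity transformation $\tilde M = D^{-1/2} M D^{1/2}$ with $D = \mathrm{diag}(\pi_n)$ turns $M$ into a genuinely symmetric matrix with the same spectrum, so one may apply a standard symmetric eigensolver. I would then cite the textbook result~\cite{Press} that reducing a dense symmetric $N\times N$ matrix to tridiagonal form by Householder transformations costs $O(N^3)$, that the subsequent QL/QR iteration with implicit shifts costs $O(N^2)$ per sweep and converges in $O(N)$ sweeps on average, and that accumulating the eigenvectors and undoing the symmetrization add only $O(N^3)$ and $O(N^2)$ respectively — so the full spectral decomposition is $O(N^3)$.

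Second I would check that no other part of the algorithm exceeds this bound. Forming the coefficients $c_{\alpha,f}$ of the first-passage density requires, for each of the $N$ eigenpairs, an inner product with $|i\rangle$ and one application of $(W-M)$; since $W-M$ couples $V$ only to the boundary $F$ and is sparse in the vertex basis, this is at most $O(N^2)$ in total. Building the envelope curve $g(t)$ from the mixture components $f_\alpha(t)$ and the constants $d_\alpha$ touches each eigenvalue a bounded number of times, hence is $O(N)$; and by Theorem~\ref{RRatio} each rejection draw terminates after an expected number of trials bounded by $1.5$, so producing a single first-passage sample costs $O(N)$ on average once the spectrum is known. Summing these contributions gives $O(N^3) + O(N^2) + O(N) = O(N^3)$.

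The argument has essentially no hard step; the only point requiring care is the phrase \emph{average} time complexity, which refers to the empirically observed (and, for the shifted QR algorithm, generically provable) linear number of iterations needed in the iterative eigenvalue phase — worst-case convergence of the iteration is not polynomially bounded for every input, so the $O(N^3)$ claim is to be understood in this typical-case sense, matching the cited reference. I would state this caveat explicitly and otherwise let the citation carry the numerical-linear-algebra details.
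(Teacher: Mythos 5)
Your proposal is correct and matches the paper's intent: the paper gives no proof at all for this theorem, simply citing the standard $O(N^3)$ cost of a dense symmetric eigensolver in~\cite{Press}, and your argument (symmetrizing $M$ via $D^{\pm 1/2}$ with $D=\mathrm{diag}(\pi_n)$, which is legitimate by detailed balance, invoking Householder reduction plus shifted QR, and checking that forming the coefficients $c_{\alpha,f}$ and the rejection sampling are dominated) is a faithful and correct expansion of that citation, including the appropriate caveat about what ``average'' means for the iterative phase.
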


\subsection{\label{spec2}Spectral Sampling 2: Modified embedded Markov chain method}
The efficiency of the SSA is dependent on the relaxation time of the
embedded Markov chain (EMC). We use this observation to modify the
basis in which the EMC is simulated. The standard method of executing
a random walk is to consider the transition between adjacent states,
each of which is localized at a vertex of the CTMM. However, correct
simulation only requires that these states form a basis, not that they
are orthogonal. If we can choose a set of states which are
increasingly likely to appear during the simulation of the Markov
chain, we are unlikely to make repeated visits to the same state. In
order to identify such a basis starting from an initial state
$|i\rangle$, we first identify the transition matrix for an embedded
Markov chain that correctly describes the given CTMM. Consider the
vertex set $V=\{1,2, \ldots, N\}$ and the basis constructed from $V$,
$B=\{|1\rangle,|2\rangle, \ldots, |N\rangle\}$.  At any given time
$t$, let the state of the time-evolved Markov chain be
$|\psi(t)\rangle=\sum_{i=1}^{N}\psi_{i}|i\rangle$. Let $V_{t}=\{i \in
V|\psi_{i} \neq 0\}$ be the vertex subset populated by the current
state vector. We construct the EMC for the subgraph induced by $V_{t}$
at each step of the algorithm. Given the projection of the Kolmogorov matrix $M$ over the vertex set $V$, choose
$r=max(-M_{ii}|\psi_{i} \neq 0)$ to be the effective rate of
transition to the next state and choose an exponentially distributed
random time step $\tau$ with mean waiting time $1/r$. Then,
$L_{t}=-(1/r)*M $ is the Laplacian governing the EMC and $Q_{t} = I -
L_{t}$ is the effective transition matrix at that time step. The next
state vector is chosen to be $|\phi\rangle =
Q_{t}|\psi(t)\rangle$. The reason for choosing this particular value
of $r$ is to ensure that no term in $Q_{t}$ becomes negative, a
necessary condition for a transition matrix.
\begin{figure}
\includegraphics[scale=.8]{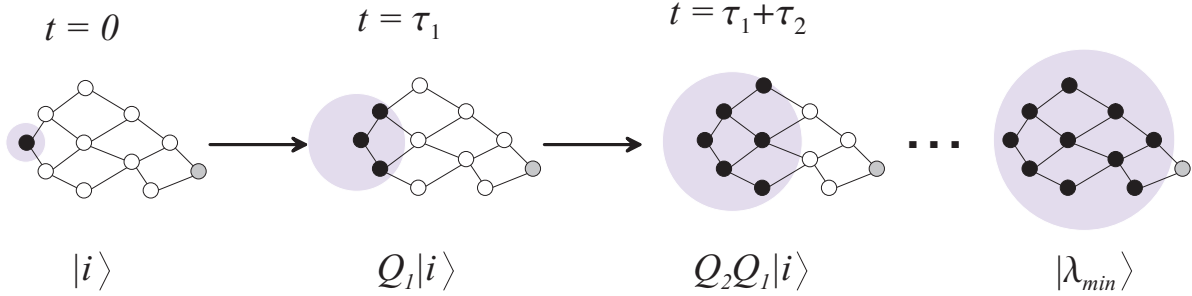}
\caption{Schematic of the EMC-based spectral method. Vertices in black are the currently occupied nodes. Simulation advances the system state as a discrete mixture until such time as the state has relaxed to its slowest eigenstate $|\lambda_{min}\rangle$. At each step, direct transitions to the absorbing vertex (grey) are computed according to the Kolmogorov matrix.}
\end{figure}
\begin{theorem}
The choice of next state is consistent with the master equation governing the CTMM.
\end{theorem}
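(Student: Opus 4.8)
The plan is to show that a single iteration of the update rule---draw $\tau\sim\mathrm{Exp}(r)$ and replace $|\psi(t)\rangle$ by $Q_t|\psi(t)\rangle$---is an exact decomposition of the propagator $e^{\tau M}$ generated by the restricted master equation, so that composing such steps produces trajectories, and hence first-passage times, with the law prescribed by the CTMM. First I would confirm that $Q_t$ is an admissible sub-stochastic matrix. Its off-diagonal entries $(Q_t)_{nm}=M_{nm}/r$ are nonnegative because $M_{nm}\ge 0$ for $n\ne m$, and its diagonal entries $(Q_t)_{nn}=1-|M_{nn}|/r$ are nonnegative precisely because $r=\max_i(-M_{ii})$ over the occupied set dominates every $|M_{nn}|$ that appears---the point already flagged in the text. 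Summing a column and using that $M$ is a generator, $\sum_{n\in V}(Q_t)_{nm}=1+\tfrac1r\sum_{n\in V}M_{nm}=1-\tfrac1r\sum_{f\in F}W_{fm}\le 1$; the deficit $\tfrac1r\sum_{f\in F}W_{fm}$ is exactly the one-step probability of passing from $m$ into the absorbing set, which the algorithm samples separately from the Kolmogorov matrix, so no probability is created or lost.

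The heart of the argument is to identify the construction as \emph{uniformization} of $M$ at rate $r$: the self-loop weights $1-|M_{nn}|/r$ compensate for the fact that the true exit rates $|M_{nn}|$ vary across states while the clock rate $r$ is common. Since $Q_t=I+M/r$,
\begin{equation}
e^{\tau M}=e^{-r\tau}e^{r\tau Q_t}=\sum_{k\ge 0}e^{-r\tau}\frac{(r\tau)^k}{k!}\,Q_t^{\,k},
\end{equation}
so a rate-$r$ Poisson clock that applies $Q_t$ at each tick has state distribution $\sum_k\mathrm{Pois}(r\tau;k)\,Q_t^{\,k}|\psi\rangle=e^{\tau M}|\psi\rangle$ after time $\tau$. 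Because successive ticks are separated by i.i.d.\ $\mathrm{Exp}(r)$ holding times, a single algorithmic step is exactly one holding time followed by one application of $Q_t$, and iterating it realizes the uniformized chain and therefore the continuous-time semigroup.

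To make this rigorous in the presence of the fact that $r$ and $Q_t$ are \emph{recomputed} each time the occupied set $V_t$ changes, I would instead verify the local renewal identity
\begin{equation}
e^{\tau M}|\psi\rangle=e^{-r\tau}|\psi\rangle+\int_0^{\tau}r\,e^{-rs}\,Q_t\,e^{(\tau-s)M}|\psi\rangle\,ds,
\end{equation}
which follows by differentiating the right-hand side in $\tau$, using $r(Q_t-I)=M$, and noting agreement at $\tau=0$. This exhibits one step as an exact split of the propagator over that step for whatever admissible $r$ is in force---no uniformized event by time $\tau$ with probability $e^{-r\tau}$, otherwise a first event at time $s$, a jump by $Q_t$, and exact continuation---so composing exact splits over consecutive intervals yields, by induction on steps, the exact propagator over their union.

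The obstacle I expect to matter is precisely this bookkeeping across steps: the clock rate $r$ is selected endogenously from the current mixture's support, so one must check that re-selecting it each step does not break consistency, and one must track that the deterministically propagated mixture $|\psi(t)\rangle$ is the \emph{sub}-probability $e^{tM}|\psi_0\rangle$ supported on $V$, the mass it loses at each step being exactly the flux absorbed into $F$ that is sampled on the side. The local renewal identity disposes of the first concern, since it involves only the current $r$, $Q_t$ and $|\psi\rangle$; the remainder is the routine verification that the separately sampled $V\!\to\!F$ transitions carry exactly the missing column-sum mass, so that the combined process---motion within $V$ together with absorption into $F$---is the CTMM governed by $W$.
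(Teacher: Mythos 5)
Your proposal is correct, and it is in fact a more complete argument than the one the paper gives. Both rest on the same algebraic kernel, $M = r(Q_t - I)$, but the paper essentially stops there: it rewrites the master equation in the pair $\{|\psi\rangle,|\phi\rangle\}$ (completed arbitrarily to a basis) as $d|\psi\rangle/dt = r\left(|\phi\rangle - |\psi\rangle\right)$ and appeals to uniqueness of the decomposition to conclude that the next state is $|\phi\rangle$. That identifies the instantaneous form of the generator, but it leaves implicit exactly the points you single out: why a single application of $Q_t$ after one exponential waiting time of rate $r$ is an exact split of the propagator, why recomputing $r$ and $Q_t$ as the occupied set changes does not spoil consistency, and why the column-sum deficit of $Q_t$ is precisely the probability flux into the absorbing set that the algorithm samples separately. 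You supply these via the sub-stochasticity check, the uniformization identity $e^{\tau M} = e^{-r\tau}e^{r\tau Q_t}$, and the renewal (Duhamel-type) identity
\begin{equation}
e^{\tau M}|\psi\rangle = e^{-r\tau}|\psi\rangle + \int_0^{\tau} r\,e^{-rs}\,Q_t\,e^{(\tau-s)M}|\psi\rangle\,ds,
\end{equation}
whose step-by-step composition handles the endogenously changing clock rate. What the paper's argument buys is brevity and the guiding intuition (the evolving mixture obeys a two-state master equation with rate $r$); what yours buys is a genuine proof of distributional consistency of the sampled first-passage law, including the absorbed-mass bookkeeping, which the paper's one-line basis argument only asserts.
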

\begin{proof} Rewrite the master equation in terms of the $\{|\psi\rangle,|\phi\rangle\}$ basis (where the other $N-2$ linearly independent basis vectors can be chosen arbitrarily):
\begin{eqnarray}
\frac{d|\psi\rangle}{dt}&=& r\left( I + \frac{1}{r}M - I\right)|\psi\rangle \nonumber\\
&=& r\left(Q_{t} - I\right)|\psi\rangle \nonumber\\
&=& r\left( |\phi\rangle - |\psi\rangle \right)
\label{EMCeq}
\end{eqnarray}
Since there is a unique decomposition for any vector in terms of a linearly independent basis set, Eq. \ref{EMCeq} proves that starting from $|\psi\rangle$ the next state is uniquely determined to be $|\phi\rangle$.
\end{proof} 
\begin{figure}
\includegraphics[scale=0.9]{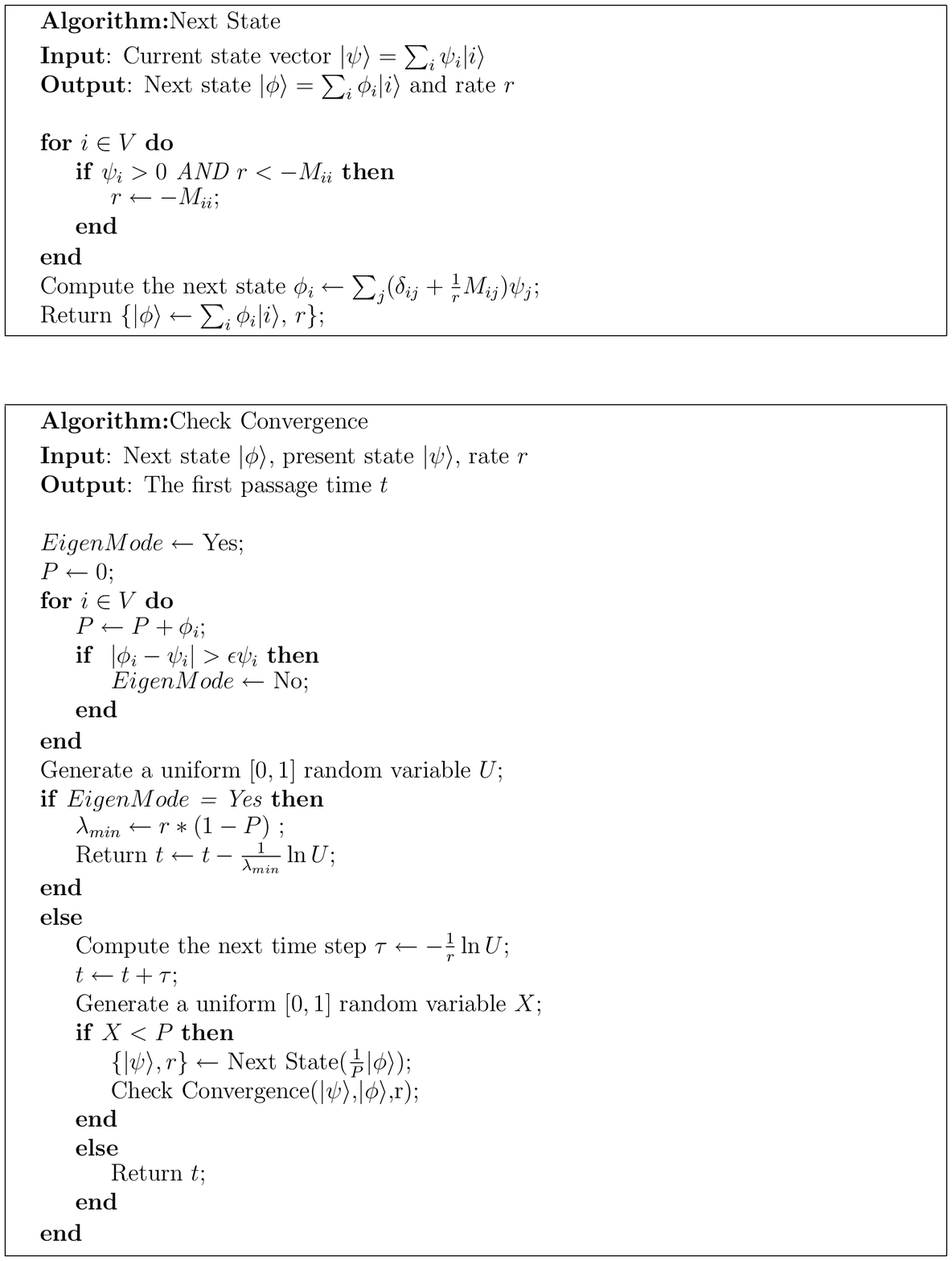}
\caption{Pseudocode for Spectral method 2}
\end{figure}

  In general, the next state $|\phi\rangle$ will have a total
probability $P = \sum_{i}\phi_{i} \le 1$, due to possible transitions
out of the subgraph. We check if that is the case by generating a
$[0,1]$ random variable $X$ to compare with $P$. If $X<P$, the next
state is still trapped inside the subgraph and we normalize it
as $\psi(t+\tau)=1/P|\phi\rangle$. $|\psi(t+\tau)\rangle$ is used to
generate the next state in the simulation.  This sequence will
continue until the state has relaxed to its slowest eigen vector
$|\lambda_{min}\rangle$, such that $M|\lambda_{min}\rangle =
-\lambda_{min}|\lambda_{min}\rangle$ to within a user-defined relative
error $\epsilon$. Once that state is achieved, we just need one more
exponentially distributed random sample time $\tau$ with mean
$1/\lambda_{min}$ to escape the network.

SSA chooses a stochastic trajectory by sampling both the next neighbor
and the time for the next step at random. The EMC method, on the other
hand, evolves deterministically in our modified basis and only the time
between transitions is stochastic. At each time step, transition to the
absorbing boundary states is governed by the matrix elements
connecting each of the transient states to the absorbing boundary. The
advantage of such an approach is that it allows us to automatically
compute the most slowly decaying eigenvector during the simulation. For
completeness, we note the following result:
\begin{theorem} For a graph of degree bounded by $d$ and $V$ of cardinality $N$, each step of this algorithm takes $O(N*d)$ time.\end{theorem}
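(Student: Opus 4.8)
The plan is to tally the cost of each elementary operation performed in one iteration of the EMC-based algorithm and to show that it is dominated by the sparse matrix--vector product $|\phi\rangle = Q_t|\psi(t)\rangle$.

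First I would enumerate the work done in a single step: (i) identifying the support $V_t = \{i : \psi_i \neq 0\}$ of the current state vector; (ii) computing the effective rate $r = \max(-M_{ii} : \psi_i \neq 0)$ and drawing the exponential waiting time $\tau$ of mean $1/r$; (iii) assembling $Q_t = I + (1/r)M$ on the occupied subgraph and forming $|\phi\rangle = Q_t|\psi(t)\rangle$; (iv) computing the deficit $P = \sum_i \phi_i$, drawing a uniform $X \in [0,1]$, and either renormalizing $|\psi(t+\tau)\rangle = (1/P)|\phi\rangle$ or declaring first passage out of the subgraph; and (v) testing whether $|\psi\rangle$ has relaxed onto the slowest eigenvector $|\lambda_{min}\rangle$ to within the prescribed relative error $\epsilon$. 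Each of (i), (ii), and (iv) touches at most $N$ entries of an $N$-dimensional vector and hence costs $O(N)$, while drawing $\tau$ and $X$ is $O(1)$.

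The crux is step (iii). Since the graph has degree bounded by $d$, every row of $M$ --- and therefore of $Q_t$ --- has at most $d+1$ nonzero entries, namely the diagonal together with the at most $d$ neighbors of the corresponding vertex; and $|\psi(t)\rangle$ has at most $N$ nonzero components because it lies in the $N$-dimensional space spanned by $V$. Thus each coordinate $(Q_t\psi)_i = \sum_j (Q_t)_{ij}\psi_j$ is a sum of at most $d+1$ products, and ranging over the at most $N$ nonzero coordinates gives $O(N(d+1)) = O(Nd)$ for the product. Reading off the submatrix of $M$ needed to build $Q_t$ on $V_t$, and computing the direct transition weights from the transient states to the absorbing boundary, each involve only the $O(d)$ edges incident to each of the at most $N$ occupied vertices and are likewise $O(Nd)$. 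Finally the stopping test (v) can be implemented by forming $M|\psi\rangle$ --- another $O(Nd)$ sparse product --- and checking whether it is proportional to $|\psi\rangle$ to within relative error $\epsilon$, an additional $O(N)$ comparison. Adding the contributions, one step costs $O(Nd)$.

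I do not anticipate a real obstacle; the only point needing care is to verify that every subroutine, in particular the relaxation test and the bookkeeping for transitions to the absorbing boundary, can be carried out without ever materializing a dense $N \times N$ object, so that the sparsity forced by the degree bound is genuinely exploited. Granting that, the claim is just the familiar cost of sparse matrix--vector multiplication.
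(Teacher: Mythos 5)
Your proposal is correct: the paper states this theorem without proof ("for completeness"), and your accounting --- bounded degree $d$ gives at most $d+1$ nonzeros per row of $M$ and hence of $Q_t$, so the product $Q_t|\psi(t)\rangle$, the boundary-transition bookkeeping, and the relaxation test all cost $O(Nd)$, with everything else $O(N)$ or $O(1)$ --- is exactly the intended sparse matrix--vector argument. Nothing further is needed.
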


\subsection{\label{AD}Automated discovery of trapped subgraphs} 

As previously mentioned, stiffness in Markov model graphs results from
repeated visits by a typical random trajectory to a small subset of
vertices of the entire graph. Since the performance of spectral
methods is sensitive to the size of the vertex set, it would prove
useful if we could somehow identify these ``trapped'' subgraphs for
stiff Markov models and apply spectral methods directly to those. In
this section we present one such method, which we call ``Automated
Discovery'' (AD) and which we show to be formally applicable to arbitrary
bounded-degree graphs.

Let there be a state space $S$ over which a CTMM is defined and
consider a subgraph $G(V,E)$ with vertex set $V \subset S$ and edge
set $E$. Starting from an initial state $i\in V$, we are interested in
the time $T_{F}(i)$ to first passage out of $V$. Consider the subgraph
$H_i$ induced by the vertex set $U_i\subseteq V$ visited by a random
trajectory executing the SSA random walk before it escapes $V$ and let
$N_i =|U_i|$ be the cardinality of $U_i$. If $T_{Fi}$ is the number of
steps a SSA random walk takes to escape $V$, then a Markov model will
show simulation stiffness whenever the expected values satisfy,
\begin{equation}\label{stiffAD}
E[N_i] << E[T_{Fi}]
\end{equation}
since this would imply certain vertices in $U_i$ are being visited
repeatedly. AD works by progressively sampling larger regions of $V$
until it identifies a subgraph $K_i$ induced by a vertex set
$W_i\subseteq V$ such that $U_i \subseteq W_i$. Once $K_i$ is
identified either of the spectral methods can be used directly over
$K_i$. The method will be efficient as long as $|W_i| \sim |U_i|$ and
the number of steps taken to identify $K_i$ is comparable to the
computational cost of using spectral sampling over $K_i$. Formally,
$H_i$ can be exactly discovered by repeatedly enlarging the discovered
graph to include the last vertex outside $K_i$ visited by the
trajectory. If spectral sampling for a graph of vertex set size $N$
works in time $f(N)$, this procedure would ensure that implementing
spectral sampling in conjunction with automated discovery takes
$O(N_i*f(N_i))$ steps. The stiffness condition (Eqn.~\ref{stiffAD})
would usually ensure that this procedure is still efficient. *B* Another
method for discovering the trapped subgraph would be to implement the
SSA random walk for a specified number of steps $S(N)$(depending on the size
of the vertex set $N$). Since eigenvalue methods are in general
$O(N^3)$, we can implement SSA until $S(N) \leq C*N^3$ for some constant $C$,
to discover the trapped subgraph $K$ and then use spectral sampling to escape the discovered graph.  This
alternative approach could be less efficient in some circumstances,
but would guarantee that the overhead for spectral sampling is no
more than a constant factor beyond that of the standard SSA.
Fig.~\ref{ADpcode} shows the pseudocode for implementing AD for a given graph by this method. 
The algorithm generates a sample trajectory using SSA till such time 
that the trajectory spends $O(N^3)$ steps within a trapped graph $K$ of vertex set cardinality $N=|K|$. 
Then either of the spectral methods described in section \ref{spec1} or \ref{spec2} are 
used to sample the first passage outside $K$, to a vertex $i$. In general the state of the system at 
the time of first passage outside $K$ will be a discrete probability mixture of more than one vertices.
 In such cases, the vertex $i$ is randomly selected in accordance with the appropriate probability weight.
The algorithm then resumes SSA execution over the enlarged graph $K\bigcup\{i\}$.*E*
 Further investigation is, however, required
to search for algorithms that may further improve the performance of
AD. In section \ref{latticeAD} we prove that
for at least one important class of graphs, namely models of
chemically reacting species, we can indeed reduce the time complexity
to its optimal value to within a constant factor, \emph{i.e.},
$O(f(N_i))$.

\begin{figure}
\includegraphics[scale=0.8]{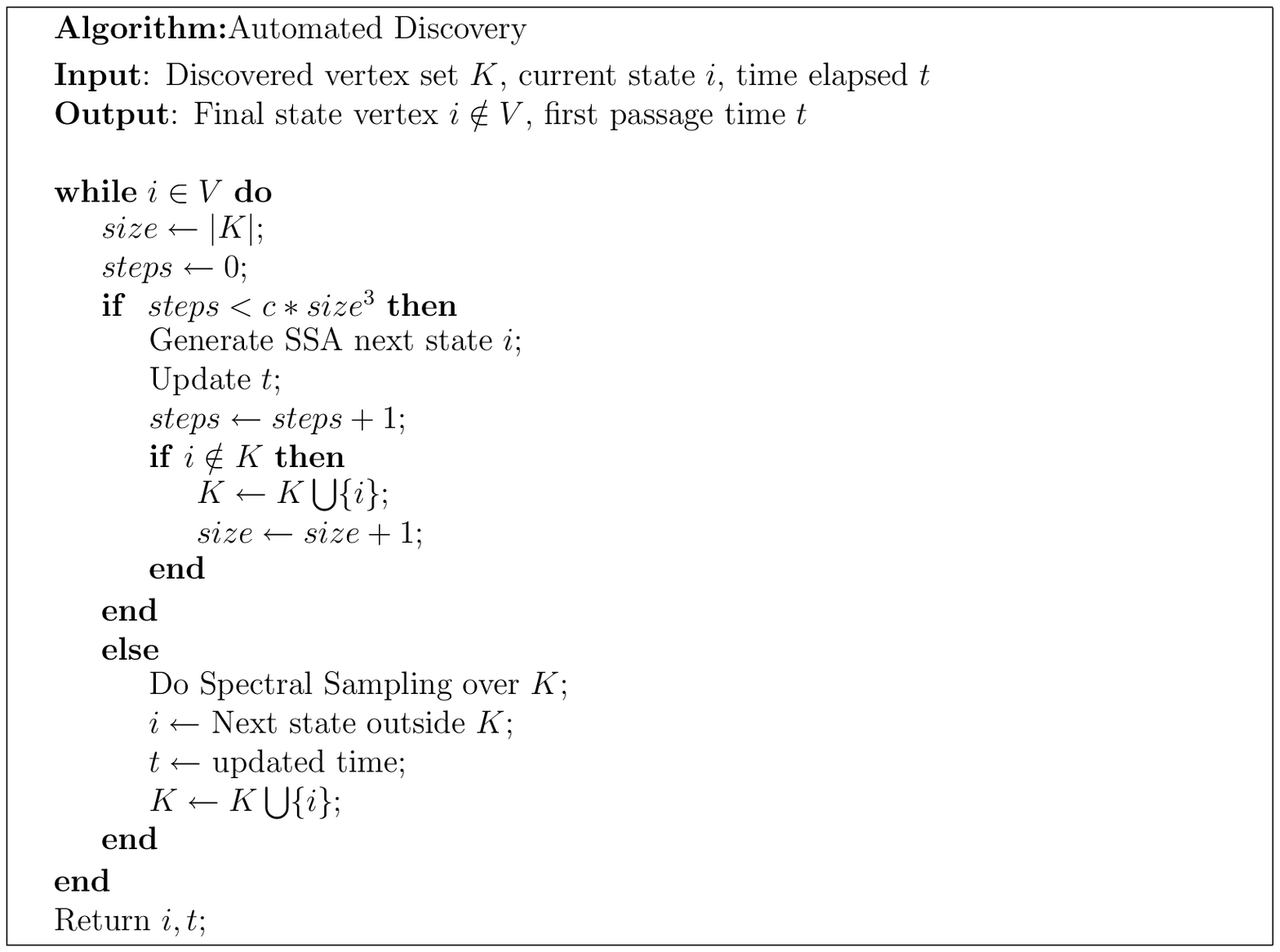}
\caption{Pseudocode for Automated Discovery}\label{ADpcode}
\end{figure}

\section{\label{BondNet}Application When the Subgraph is Known: Fracturing Bond Networks}

\subsection{Stiffness in SSA for bond networks} 

In order to validate the methods, we instantiate them for some
specific challenging systems.  We begin by demonstrating the non-AD
variants of the methods for the problem of sampling the time required
to break a network of bonds.  This problem is an example of a stiff
SSA on a generally small graph.  It is also of independent interest
because of its importance in modeling self-assembly processes on long
time scales.  Given such a system, we are interested here in the first
passage time to the subset of states corresponding to disconnected
graphs $V_{b} \subset S$. Since each bond can occur in two states,
intact or broken, a network of $d$ bonds can be represented as a
vertex on a unit hypercube in $d$ dimensions. The state space
generated by the bond network before it becomes disconnected will
usually be a truncated unit hypercube. An N-cycle $C_{N}$ generates
the simplest non-trivial example, where the absorbing boundary is
placed at all points on the hypercube at distance 2 from the
fully-connected state.  Fig.\ref{SSA}(c) illustrates this absorbing
boundary for $C_{3}$. Given a $d$-bond network, we will represent the
$\mu^{th}$ bond-breaking rate by $b_{\mu}$ and association or binding
rate by $a_{\mu}$. It is convenient to represent a vertex on this
hypercube by a binary $d$-tuple $\bm{i}= \{i_{d},\ldots i_{\mu},\ldots
i_{1}\}$, where $i_{\mu}=0$ implies that the $\mu^{th}$ bond is intact
(see Fig. \ref{SSA}(c) for the graph corresponding to a trimer).  From
here on, we will use the notation
$\hat{\bm{\mu}}=\{\delta_{d\mu},\ldots, \delta_{1\mu}\}$ for the
vector describing a state of the model with only the $\mu^{th}$ bond
broken. For such a graph, the time complexity of each SSA step is
$O(d)$. In the rest of this paper we will use this model of truncated
hypercubes to represent bond networks. Morris and
Sinclair\cite{Morr05} have proven that in the case of unweighted
graphs, a random walk on a hypercube truncated by a hyperplane relaxes
to equilibrium in polynomial time bounded by $O(d)^{9/2+\epsilon}$ for
any $\epsilon > 0$. However, as we have argued in the introduction,
the mean \emph{hitting time}, i.e., the number of random walk steps
between a pair of vertices, can be extremely sensitive to the
parameters governing the walk. We formalize this observation in the
Theorem~\ref{SSAcon} below, which bounds the expected number of SSA
steps before the network is disconnected. Let $r\equiv
Min(a_{\mu}/b_{\nu}|\mu ,\nu \in\{1,\ldots d\})$.
\begin{theorem}\label{SSAcon} The expected number of SSA steps required to break a $k$-connected network with $k>1$ and $r>1$ is $\Omega(r^{k-1})$.
\end{theorem}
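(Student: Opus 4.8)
The plan is to lower-bound the expected number of SSA steps by tracking, as a potential function, the number of currently broken bonds, and showing that before the walk can reach a disconnected state it must climb against a strong drift pulling it back toward the fully-connected vertex. The key structural fact is that a $k$-connected network requires at least $k$ bonds to be simultaneously broken before it can become disconnected, so the absorbing set $V_b$ lies at Hamming distance $\geq k$ from the origin $\bm{0}$ (all bonds intact). Since $r = \min_{\mu,\nu} a_\mu/b_\nu > 1$, the EMC strongly favors association over dissociation at every vertex that has at least one broken bond, so the ``broken-bond count'' process is, up to the transition-rate weighting, a biased random walk on $\{0,1,\dots\}$ with bias toward $0$.

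\medskip

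\noindent\textbf{Step 1 (reduce to a birth-death chain).} Let $X_t$ be the number of broken bonds after SSA step $t$. From a vertex $\bm{i}$ with $j \geq 1$ broken bonds, the probability that the next SSA step increases $X$ (breaks a new bond) versus decreases it (repairs one) is governed by the ratio of total outgoing association propensity to total outgoing dissociation propensity. Each of the $j$ broken bonds contributes an association rate $a_\mu$ and each of the $d-j$ intact bonds a breaking rate $b_\nu$; using $a_\mu/b_\nu \geq r$ I would show $P(X_{t+1}=j-1 \mid X_t=j) \geq \frac{r}{r + (d-j)/1}$-type bound — more carefully, $\frac{\sum_{\text{broken}} a_\mu}{\sum_{\text{broken}} a_\mu + \sum_{\text{intact}} b_\nu}$, and I want to argue the \emph{upward} probability $p_j^{\uparrow}$ satisfies $p_j^{\uparrow}/p_j^{\downarrow} \leq 1/r$ after absorbing the $d$-dependent combinatorial factors into the bound (or, more honestly, into a statement that is $\Omega(r^{k-1})$ with the $d$-dependence hidden in the constant, which is acceptable since the theorem is an $\Omega$ in $r$). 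This dominates $X_t$ below by a birth-death chain $Y_t$ on $\{0,1,\dots,k\}$ with up-probability at most $1/(1+r)$ and down-probability at least $r/(1+r)$, absorbed at $k$.

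\medskip

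\noindent\textbf{Step 2 (hitting time of the birth-death chain).} For a biased walk on $0,1,\dots,k$ with reflecting/holding behavior near $0$ and per-step down/up odds ratio $\geq r$, the expected time to first reach level $k$ starting from level $0$ is $\Omega(r^{k-1})$. I would get this from the standard electrical-network / Gambler's-ruin computation: the expected hitting time of $k$ is at least $\sum_{j=1}^{k-1} r^{j}$ up to constants (each additional level costs a factor $r$ because the walk makes $\Theta(r)$ excursions back down before advancing), giving the dominant term $\Theta(r^{k-1})$. Since $X_t$ stochastically dominates $Y_t$ from above is the wrong direction — I need $X_t \leq_{st}$ something, i.e. $Y_t$ dominates from \emph{above} is also wrong; the correct statement is that the hitting time of level $k$ by $X_t$ is $\geq$ hitting time of level $k$ by the faster-climbing chain $Y_t$, and $Y_t$ is the chain with the \emph{largest} allowed up-probability consistent with the bound, so a coupling shows $T_F \geq_{st} \tau_k(Y)$. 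Then $E[T_F] \geq E[\tau_k(Y)] = \Omega(r^{k-1})$. The analogue for the $N$-path mentioned in the introduction ($O(r)^{N-2}$) is exactly this computation with $k-1 \to N-2$, which is a useful sanity check.

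\medskip

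\noindent\textbf{The main obstacle} I expect is Step 1: controlling the comparison cleanly when the number of intact bonds $d-j$ is large, since then many competing ``repair-a-different-bond-later'' and ``break-yet-another'' transitions appear and the naive ratio picks up factors of $d$. The honest fix is either (a) to note these factors are absorbed into the $\Omega$ constant (the theorem only claims $\Omega(r^{k-1})$, treating the network size as fixed), or (b) to restrict attention to a single ``critical'' set of $k$ bonds whose simultaneous breakage is necessary for disconnection and track only those, ignoring the others as harmless detours — this gives a cleaner birth-death reduction on exactly $k+1$ states with transition odds ratio $\geq r$ between consecutive levels and no $d$-dependence at all. I would pursue (b): fix a minimum bond-cut $C$ of size $k$ (exists and has size $\geq k$ by $k$-connectivity — indeed $=k$ if the connectivity is exactly $k$, and $\geq k$ in general so the exponent $r^{k-1}$ still lower-bounds it), observe $V_b$ requires all of $C$ broken (false in general — a minimum cut of size exactly $k$ must be fully broken, but there could be other cuts; choose $C$ to be a minimum cut so $|C|=k$ and note the walk must break \emph{some} cut, each of size $\geq k$), and lower-bound $T_F$ by the time to break all of any fixed $k$-subset, reducing to the $k+1$-state birth-death hitting-time bound $\Omega(r^{k-1})$.
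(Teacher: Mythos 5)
Your main line of argument (Steps 1--2, closed with fix (a)) is sound, and it is a genuinely different route from the paper's. You lower-bound the first-passage time by the hitting time of level $k$ for the broken-bond count, couple that count to a biased birth--death chain whose up-probability is bounded uniformly by $\frac{d-1}{d-1+r}$ over all states with at least one broken bond (the worst case $j=1$), and invoke the gambler's-ruin estimate $\Omega\bigl((r/(d-1))^{k-1}\bigr)=\Omega(r^{k-1})$ for fixed $d,k$; since every cut of a $k$-connected network has at least $k$ bonds, any disconnected state has at least $k$ broken bonds, so the count must reach $k$ before first passage and the domination goes the right way, as you eventually sort out. The paper instead works with the embedded chain globally: it lumps all of $V_{b}$ into a single vertex $\bm{b}$ chosen to preserve detailed balance, applies the cover-time identity $E[T_{\bm{i}\bm{j}}]+E[T_{\bm{j}\bm{i}}]=1/(\pi_{\bm{j}}Pr[T_{\bm{j}\bm{j}}>T_{\bm{i}\bm{j}}])$ to the aperiodic two-step chain (handling the bipartiteness of the SSA walk), shows via detailed balance that the invariant weight of a state with $c$ broken bonds decays like $r^{-(c-1)}$ so that $\tilde{\pi}_{\bm{b}}=O(r^{-(k-1)})$, and separately bounds the return time $E[T_{\bm{0}\bm{b}}]\leq k(1+d/r)^{k}$. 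That buys explicit constants $P(d,k)$ and a matching upper bound on $E[T_{\bm{b}\bm{0}}]$, asymptotically tight as $r\rightarrow\infty$; your drift/coupling argument is more elementary, yields only the lower bound (which is all the theorem claims), and hides the $d$-dependence in the $\Omega$ constant --- exactly as the paper does through $P(d,k)$.

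However, the variant you say you would actually pursue, option (b), has a genuine gap. There is in general no fixed set $C$ of $k$ bonds whose breakage is necessary for disconnection: the network disconnects as soon as \emph{some} cut is fully broken, and already for the cycle $C_{N}$ ($k=2$, where the cuts are all pairs of edges) no fixed pair lies in every cut. Consequently ``lower-bound $T_{F}$ by the time to break all of a fixed $k$-subset'' is invalid --- that time can exceed $T_{F}$, because the walk may disconnect through a different cut long before your chosen $C$ is fully broken, so the inequality points the wrong way. The correct salvage is precisely your option (a): track the total broken-bond count, use that it must reach $k$ before any state of $V_{b}$ can be hit, and accept a $d$-dependent constant. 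Keep Steps 1--2 with (a) and drop (b).
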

 A detailed proof of the theorem is provided in the appendix.
Figs.~\ref{CNSSAsteps} and \ref{ZNSSAsteps} provide an empirical
demonstration of the theorem. Fig.~\ref{CNSSAsteps} analyzes the
number of steps required in 100 trials of the SSA algorithm for
simulating the breakage of a set of cycle graphs $C_N$ ranging in size
from three to seven.  Each model was examined using ratios of forward
to backward rate from 1 to 20 in increments of 1.  Breakage times for
the cycle graphs increase linearly with rate ratio, although they also
fall monotonically with cycle size (Fig.~\ref{CNSSAsteps}(a)).
Fig.~\ref{ZNSSAsteps} analyzes the number of steps required to break
k-connected hypercube graphs of dimensions
$k=\{2,3,4,5\}$. Fig.~\ref{ZNSSAsteps}(a) also shows that the slope of
a log-log plot approaches the predicted exponent $k-1$.
Fig. \ref{CNSSAsteps}(b) and \ref{ZNSSAsteps}(b) suggest why a
spectral approach might be effective --- as the reaction rate increases,
steps to first passage behave more like a geometric random variable
(as mean $\rightarrow \infty$, standard deviation $\rightarrow$ mean),
as expected for a slowly decaying eigen mode of the transition matrix.
More detailed explanations of the simulation protocol for these
figures is provided in section \ref{netmod}.

\begin{figure}
\centerline{\includegraphics[scale=0.700]{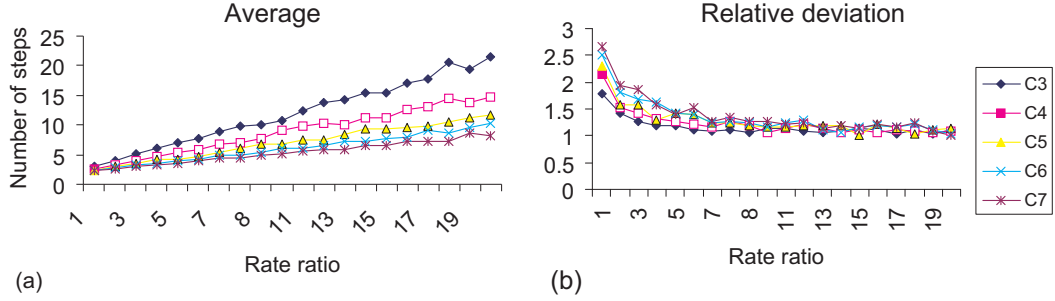}}
\caption{Number of SSA steps until first passage for the network generated by an N-cycle $C_{N}$. (a) Average number of steps $\langle s \rangle$ (b) Relative deviation $ \frac{\surd\langle \delta s^{2} \rangle}{\langle s \rangle}$}
\label{CNSSAsteps}
\end{figure}
\begin{figure}
\centerline{\includegraphics[scale=0.700]{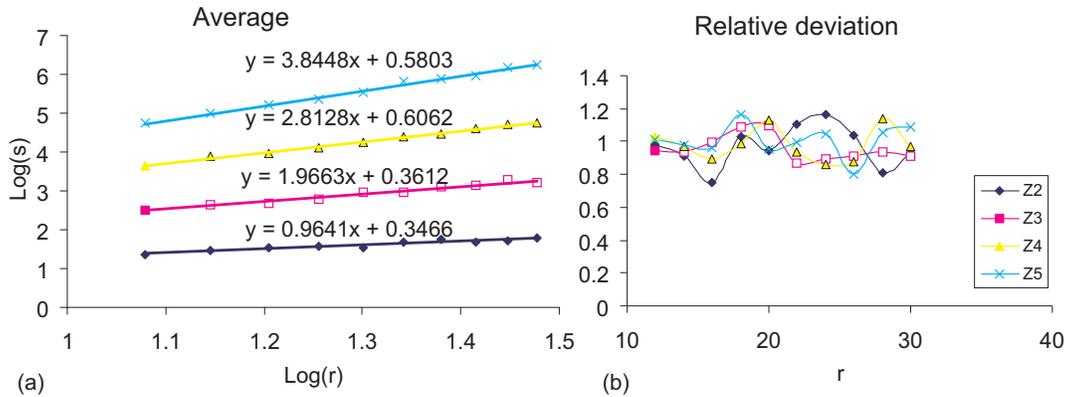}}
\caption{Number of SSA steps until first passage on an N-dimensional unit hypercube $Z_{N}$. (a) A $Log_{10}-Log_{10}$ plot of the average number of steps $s$ versus rate ratio $r$ (b) Relative deviation $ \frac{\surd\langle \delta s^{2} \rangle}{\langle s \rangle}$}
\label{ZNSSAsteps}
\end{figure}

\subsection{\label{MEnet}Master equation for bond networks} 
\subsubsection{Spectral analysis for breaking the $C_{N}$ network} 
We illustrate the Master Equation spectral method using the
cycle graph $C_{N}$ as an example. This is a graph of $N$ vertices and
$N$ edges, connected together in a loop such that exactly two edges
need to be removed to disconnect the graph (called a \emph{separation
pair}). The state space is $S = \{ \bm{0}\}
\bigcup_{\mu=1}^{N}\{\hat{\bm{\mu}}\}
\bigcup_{\mu=2}^{N}\bigcup_{\nu<\mu}\{\hat{\bm{\mu}} +
\hat{\bm{\nu}}\}$.  In this case, the subspace $V_{b} =
\bigcup_{\mu=2}^{N}\bigcup_{\nu<\mu}\{\hat{\bm{\mu}} +
\hat{\bm{\nu}}\}$ defines the absorbing boundary and the subspace
$V_{c}=S - V_{b}$ defines the space of \emph{transient} states. We
begin with the most general form for $M$, the projection of $W$ onto
the subspace $V_{c}$.

{\setlength\arraycolsep{1pt}\begin{equation}
M = \left(
\begin{array}{cccccc}
-\sum_{\mu}b_{\mu} & a_{1} & a_{2} & \ldots & a_{N} \\
 b_{1} & -(a_{1}+\sum_{\mu\neq 1}b_{\mu}) & 0 &  \ldots  & 0  \\
 b_{2} & 0 & -(a_{2}+\sum_{\mu\neq 2}b_{\mu}) &  0 &  \ldots\\
   \ldots    &   \ldots   &  0 &  \ldots    &  0  \\
   b_{N}    &   0    &   \ldots   &    0    & -(a_{N} +\sum_{\mu\neq N}b_{\mu})\\
\end{array}\right)
\label{CNMat}
\end{equation}}

In what follows, we assume that all the eigenvalues of $M$ are
negative (as they must be over the subset of \emph{transient states}
since $\sum_{n}M_{nm} \le 0$ ensures any positive probability density
decays to zero) and that the set of rates $\{a_{i}\}$ and $\{b_{j}\}$
are positive (ensured by property 1 of $W$). For economy of notation,
let us define $k_{n}= a_{n}+\sum_{m\neq n}b_{m}$. Also, in what
follows we assume that the bond indices have been labeled such that
$k_{1}\le k_{2}, \ldots k_{\alpha} \le k_{\alpha+1} \ldots \le
k_{N}$. In the case of a $C_{N}$ network, the $T_{f}(i)$ distribution
can be efficiently sampled due to certain properties of the eigenvalue
distribution and the form of the eigenvectors.  Since the sampling
technique for a general CTMM will be an extension of this special
case, it will be helpful to illustrate the method by investigating the
spectral properties of $C_{N}$. The next few results establish bounds
on the eigenvalues of $M$ as a special case of the interlacing
eigenvalue theorem~\cite{Horn}.

\begin{theorem} The $N+1$ eigenvalues $\{-\lambda_{0}>-\lambda_{1}> \ldots -\lambda_{N}\}$ of the matrix $M$ in Eq. \ref{CNMat} satisfy the following:
\begin{enumerate}
\item If $k_{i}=k_{i+1}$ then $-k_{i}$ is an eigenvalue of $M$. If $n$ such diagonal elements are identical then the eigenvalue is $(n-1)$-fold degenerate.
\item There is at least one eigenvalue of $M$ in the interval $\epsilon_{i} \equiv (-k_{i},-k_{i+1})$.
\end{enumerate}
\end{theorem}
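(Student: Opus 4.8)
The plan is to exploit the \emph{arrowhead} structure of the matrix $M$ in Eq.~\ref{CNMat}. Writing the eigenvalue equation $M|v\rangle=-\lambda|v\rangle$ with $|v\rangle=(v_0,v_1,\ldots,v_N)^{T}$, rows $1,\ldots,N$ read $b_j v_0-k_j v_j=-\lambda v_j$, so that whenever $\lambda\neq k_j$ we may eliminate $v_j=b_j v_0/(k_j-\lambda)$. (Recall from detailed balance that $M$ is self-adjoint on $L^2_{\pi^{-1}}$, so its $N+1$ eigenvalues are real and the ordering in the statement is meaningful.) The argument then splits into the two cases $v_0=0$ and $v_0\neq0$, which will supply parts 1 and 2 respectively.

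First I would settle part 1. Suppose a block of $n$ indices share a common diagonal value $k_i=\cdots=k_{i+n-1}\equiv k$, and look for eigenvectors with eigenvalue $-k$. For any $j$ in the block, row $j$ forces $b_j v_0=0$, hence $v_0=0$ since $b_j>0$; rows $j$ outside the block then give $(k-k_j)v_j=0$, so $v_j=0$ there; and row $0$ collapses to the single linear constraint $\sum_{j\in\mathrm{block}}a_j v_j=0$, which is nontrivial because all $a_j>0$. Hence the $-k$ eigenspace has dimension exactly $n-1$, which is the degeneracy claim (and in particular $-k$ is an eigenvalue as soon as $k_i=k_{i+1}$). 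For part 2, take $\lambda\notin\{k_1,\ldots,k_N\}$; then $v_0\neq0$ necessarily (otherwise all $v_j=0$), and substituting $v_j=b_jv_0/(k_j-\lambda)$ into row $0$ yields the \emph{secular equation}
\[
g(\lambda)\;\equiv\;\lambda-\sum_{\mu}b_{\mu}+\sum_{\mu}\frac{a_{\mu}b_{\mu}}{k_{\mu}-\lambda}\;=\;0 ,
\]
and conversely every root of $g$ off $\{k_\mu\}$ gives a genuine eigenvector. Now fix two consecutive diagonal values with $k_i<k_{i+1}$ (the only case in which $\epsilon_i$ is a nonempty interval, and then no $k_\mu$ lies strictly between them): as $\lambda\to k_i^{+}$ the term $a_ib_i/(k_i-\lambda)\to-\infty$ while the rest stays bounded, so $g\to-\infty$, and as $\lambda\to k_{i+1}^{-}$ the term $a_{i+1}b_{i+1}/(k_{i+1}-\lambda)\to+\infty$, so $g\to+\infty$. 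By the intermediate value theorem $g$ has a root $\lambda^{\star}\in(k_i,k_{i+1})$, i.e. an eigenvalue $-\lambda^{\star}$ of $M$ in $\epsilon_i$. Since $g'(\lambda)=1+\sum_\mu a_\mu b_\mu/(k_\mu-\lambda)^2>0$ on each pole-free interval, that root is in fact unique — this is precisely the interlacing statement.

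The algebra of the secular equation and the limits at the poles are the routine parts; the only point needing care is the bookkeeping when several $k_\mu$ coincide — one must be certain that the endpoints in part 2 refer to \emph{consecutive distinct} values so the interval is genuinely pole-free, with the coincident case folded into part 1. I expect the main (minor) obstacle to be presentational rather than mathematical: if one prefers to derive part 2 from the textbook Cauchy/interlacing theorem rather than from $g$, that theorem applies to the symmetric representative of $M$ — obtained by the detailed-balance similarity $D^{1/2}MD^{-1/2}$ with $D=\mathrm{diag}(\pi_n^{-1})$, after which deleting the first row and column leaves $\mathrm{diag}(-k_1,\ldots,-k_N)$ — and positivity of all $a_\mu,b_\mu$ is then what upgrades the weak interlacing inequalities to the strict open-interval statement, with equality occurring exactly at the degeneracies of part 1.
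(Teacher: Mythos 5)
Your proof is correct, and it reaches the conclusions by a route related to, but distinct from, the paper's. The paper works directly with the characteristic polynomial $f(\lambda)=\left(\sum_\mu b_\mu+\lambda\right)\prod_i(k_i+\lambda)-\sum_i a_i b_i\prod_{j\neq i}(k_j+\lambda)$: part 1 follows by counting how many factors of $(k_i+\lambda)$ survive in every term, and part 2 by evaluating the sign of $f$ at the points $-k_i$, which alternates as $(-1)^i$, so the intermediate value theorem places a root in each gap. Your secular function is the same object in disguise, $g(\lambda)=-f(-\lambda)/\prod_\mu(k_\mu-\lambda)$, so your pole-sign IVT step and the paper's sign-alternation step are two faces of one computation; but your eigenvector-elimination presentation buys strictly more. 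The explicit description of the $-k$ eigenspace gives geometric multiplicity exactly $n-1$ (the paper only exhibits $n-1$ polynomial factors, i.e.\ algebraic multiplicity at least $n-1$, implicitly relying on self-adjointness to identify the two), and the monotonicity $g'(\lambda)>0$ between poles gives uniqueness of the eigenvalue in each $\epsilon_i$, i.e.\ strict interlacing, which the paper does not claim. Your bookkeeping for coincident diagonal entries is sound; the only slight imprecision is the phrase ``the rest stays bounded'' near a pole --- if an endpoint value is itself repeated, the coincident terms also diverge, but with the same sign, so the limits $g\to\mp\infty$ and hence the IVT step are unaffected. Your closing remark, symmetrizing via $D^{1/2}MD^{-1/2}$ and invoking Cauchy interlacing on the bordered diagonal matrix $\mathrm{diag}(-k_1,\ldots,-k_N)$, parallels the paper's own citation of the interlacing eigenvalue theorem and is a legitimate alternative, with positivity of $a_\mu b_\mu$ indeed what sharpens weak to strict interlacing.
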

\begin{proof} The \emph{eigenvalue} condition $Det|M - \lambda I|=0$ implies that the eigenvalues $\lambda$ are the zeroes of an $(N+1)^{th}$ order polynomial:
\begin{equation}
f(\lambda)= \left(\sum_{\mu}b_{\mu}+\lambda\right)\prod_{i=1}^{N}(k_{i} + \lambda) - \sum_{i=1}^{N}a_{i}b_{i}\prod_{j\neq i}(k_{j} +\lambda) =0
\end{equation} 
We establish bounds on the roots by calculating the sign of $f(\lambda)$ over the set of points $\{-k_{1}, \ldots, -k_{N}\}$.
\begin{enumerate}
\item Each term inside the summation sign in $f(\lambda)$ contains $n-1$ factors of $(k_{i} + \lambda)$. Hence  $-k_{i}$ is an $(n-1)$-fold degenerate eigenvalue. In what follows we assume that the remaining $k_{j}$ are all distinct.
\item The sign of the function $f(\lambda)$ at $\lambda=-k_{i}$ is $(-1)^{i}$. Hence $\epsilon_{i}$ encloses at least one root of $f(\lambda)$.\qedhere
\end{enumerate}
\end{proof}

The eigenvectors of $M$  $ \{|\psi_{\alpha}\rangle\}$ are mutually orthogonal for the set of non-degenerate eigenvalues.
 In the case of  non-degenerate eigenvalues ($-\lambda_{\alpha} \neq -k_{m}$), these eigenvectors are:
\begin{eqnarray}
\psi_{\alpha,n}&=& \langle n| \psi_{\alpha}\rangle = N_{\alpha}\frac{a_{n}}{(k_{n}-\lambda_{\alpha})}\\
\psi_{\alpha,0}&=& \langle 0| \psi_{\alpha}\rangle = N_{\alpha}
\end{eqnarray}
where $N_{\alpha}$ is a normalization constant.
For degenerate eigenvalues, an orthogonal basis can always be chosen using the \emph{Gram-Schmidt} procedure.  As will become apparent later, however, these eigenvalues do not contribute to the sampling in the case of a first-passage problem beginning with the unbroken loop $i=0$. 

\begin{theorem}\label{DisMix} The envelope curve $g(t)$ defined by our method is identical to the first passage density for a $C_{N}$ network.\end{theorem}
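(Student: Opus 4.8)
The plan is to prove the stronger statement that the first‑passage density from the unbroken loop $i=0$, written as $\rho(t)=\sum_{\alpha=0}^{N}c_{\alpha}e^{-\lambda_{\alpha}t}$, admits the \emph{exact} decomposition $\rho(t)=\sum_{\alpha=0}^{N-1}d_{\alpha}\bigl(e^{-\lambda_{\alpha}t}-e^{-\lambda_{\alpha+1}t}\bigr)$ with all weights $d_{\alpha}>0$; this is precisely what makes the rejection envelope of Theorem~\ref{RRatio} tight, so that $g\equiv\rho$.

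First I would make the coefficients explicit. Write $B=\sum_{\mu}b_{\mu}$ and let $\mu$ index the single‑broken states $\hat{\bm{\mu}}$. Since the only transitions out of the transient set $V_{c}$ into $V_{b}$ leave from the states $\hat{\bm{\mu}}$ (breaking a second bond), the density is the total absorbing flux $\rho(t)=\sum_{\mu}(B-b_{\mu})\,p_{\mu}(t)$. Substituting $p_{\mu}(t)=\sum_{\alpha}\pi_{\mu}\psi_{\alpha,\mu}\psi_{\alpha,0}e^{-\lambda_{\alpha}t}$, the eigenvector formulas $\psi_{\alpha,\mu}=N_{\alpha}a_{\mu}/(k_{\mu}-\lambda_{\alpha})$, $\psi_{\alpha,0}=N_{\alpha}$, and the reversible equilibrium $\pi_{\mu}=\pi_{0}b_{\mu}/a_{\mu}$ of the bond network, one obtains
\[
c_{\alpha}=\pi_{0}N_{\alpha}^{2}\sum_{\mu}\frac{b_{\mu}(B-b_{\mu})}{k_{\mu}-\lambda_{\alpha}},
\]
with the degenerate modes contributing nothing because their $0$-component vanishes. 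Expanding the envelope and matching powers of $e^{-\lambda_{\alpha}t}$ then forces $d_{\alpha}=\sum_{\beta=0}^{\alpha}c_{\beta}$ together with the closure $\sum_{\beta=0}^{N}c_{\beta}=0$, which holds because the unbroken loop is at distance two from $V_{b}$, so $\rho(0)=0$. Hence the decomposition is an identity \emph{as soon as} every partial sum $d_{\alpha}$ is nonnegative.

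The crux is this positivity, and the key observation is that the completeness relation $\sum_{\alpha}|\psi_{\alpha}\rangle\langle\psi_{\alpha}|=I$ sandwiched between $\langle 0|$ and $|\hat{\bm{\mu}}\rangle$ reproduces the orthogonality $\langle 0|\hat{\bm{\mu}}\rangle=0$, i.e.
\[
\sum_{\alpha=0}^{N}\frac{N_{\alpha}^{2}}{k_{\mu}-\lambda_{\alpha}}=0\qquad\text{for every }\mu .
\]
Because $d_{\alpha}=\pi_{0}\sum_{\mu}b_{\mu}(B-b_{\mu})\bigl(\sum_{\beta=0}^{\alpha}N_{\beta}^{2}/(k_{\mu}-\lambda_{\beta})\bigr)$ and $b_{\mu}(B-b_{\mu})>0$, it is enough to show each inner partial sum is positive, and here I would invoke the interlacing $0<\lambda_{0}<k_{1}<\lambda_{1}<k_{2}<\cdots<k_{N}<\lambda_{N}$ established above. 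If $\mu>\alpha$ then $\lambda_{\beta}<k_{\beta+1}\le k_{\mu}$ for all $\beta\le\alpha$, so the inner sum is a sum of strictly positive terms. If $\mu\le\alpha$, use the completeness identity to rewrite $\sum_{\beta=0}^{\alpha}$ as $-\sum_{\beta=\alpha+1}^{N}$; for $\beta\ge\alpha+1\ge\mu+1$ one has $\lambda_{\beta}>k_{\beta}\ge k_{\mu+1}>k_{\mu}$, so that tail is a sum of negative terms and the original partial sum is again positive. Either way $d_{\alpha}>0$. (When some $k_{\mu}$ coincide the same argument runs with the $k$'s replaced by their distinct values; the degenerate modes never enter.)

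Finally, with $d_{\alpha}>0$ and $\lambda_{\alpha}<\lambda_{\alpha+1}$, each $f_{\alpha}(t)=d_{\alpha}\bigl(e^{-\lambda_{\alpha}t}-e^{-\lambda_{\alpha+1}t}\bigr)$ is nonnegative on $(0,\infty)$ and integrable, hence (after normalization) one of the densities of Theorem~\ref{RRatio}, so $g(t)=\sum_{\alpha}f_{\alpha}(t)$ is a genuine mixture; by the coefficient matching of the second step $g\equiv\rho$, and the envelope is exact. The step I expect to be the real obstacle is recognizing that the vanishing cross‑overlap $\langle 0|\hat{\bm{\mu}}\rangle$ is exactly the algebraic identity needed to convert the interlacing inequalities into positivity of the partial sums — once that is in hand, the remainder is routine bookkeeping.
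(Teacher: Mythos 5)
Your proposal is correct and takes essentially the same route as the paper's proof: explicit eigenvector components $\psi_{\alpha,n}\propto a_{n}/(k_{n}-\lambda_{\alpha})$, the orthogonality/completeness identity forcing the full coefficient sum to vanish, the interlacing $\lambda_{0}<k_{1}<\lambda_{1}<\cdots<k_{N}<\lambda_{N}$ to make every partial sum positive, and Abel summation into the mixture $\sum_{\alpha}d_{\alpha}\bigl(e^{-\lambda_{\alpha}t}-e^{-\lambda_{\alpha+1}t}\bigr)$. The only difference is bookkeeping: you aggregate the absorbing flux $\sum_{\mu}(B-b_{\mu})p_{\mu}(t)$ before establishing positivity, whereas the paper proves the per-state partial sums $S_{\beta,n}\ge 0$ first (coefficients change sign once and total to zero) and then combines them with the positive weights $b_{m}$.
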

\begin{proof} 
Beginning with an unbroken state at $t=0$, the probability the model occupies a given state $n$ at time $t$ is given by:
\begin{equation}p_{n}(t)=\pi_{n} \sum_{\alpha=0}^{N}\psi_{\alpha,n}\psi_{\alpha,0}\exp[-\lambda_{\alpha}t] =\sum_{\alpha=0}^{N}c_{\alpha,n}\exp[-\lambda_{\alpha}t]\end{equation}
where $\psi_{\alpha,i}$ is an eigenvector of $\mathcal{M}$ with
eigenvalue $-\lambda_{\alpha}$. Note that only those
$\lambda_{\alpha}\neq k_{i}$ contribute, for otherwise
$\psi_{\alpha,0} = 0$. Assuming $\lambda_{\alpha}<\lambda_{\alpha+1}$,
the coefficients satisfy $c_{\alpha,n}< 0$ for $\alpha>n$. Since the
partial sum $S_{N,n}=\sum_{\alpha=0}^{N} c_{\alpha,n}=\sum_{\alpha=0}^{N}\pi_{n}\psi_{\alpha,n}\psi_{\alpha,0}=
\pi_{n}\langle n|0\rangle=0$ , all other partial sums satisfy $S_{\beta,n}=\sum_{\alpha=0}^{\beta} c_{\alpha,n}
\geq 0$.  These observations provide a means of
decomposing the probability density into the following discrete
mixture with positive coefficients:
\begin{equation}
p_{n}(t)= \sum_{\alpha=1}^{N}S_{\alpha,n}(\exp[-\lambda_{\alpha}t]-\exp[-\lambda_{\alpha +1}t])
\end{equation}
Since $b_{n} >0$, the combined rate of decay to any one of the broken states is given by:
\begin{equation}
\frac{dp^{B}}{dt}\equiv \sum_{(n,m)}\frac{dp_{(n,m)}}{dt}=\sum_{\alpha=0}^{N-1}S_{\alpha}f_{\alpha}(t)
\end{equation}
where,
\begin{equation}
 S_{\alpha}=\left(\frac{\lambda_{\alpha+1}-\lambda_{\alpha}}{\lambda_{\alpha}\lambda_{\alpha+1}}\right)\sum_{m} \sum_{n\neq m}\pi_{(n,m)}\left(b_{m}S_{\alpha,n}\right) >0\end{equation}
 and \begin{equation}f_{\alpha}(t)=\frac{\lambda_{\alpha}\lambda_{\alpha+1}}{\lambda_{\alpha+1}-\lambda_{\alpha}}(\exp[-\lambda_{\alpha}t] - \exp[-\lambda_{\alpha+1}t])\end{equation}\qedhere
\end{proof}

\subsection{\label{netmod}Simulation models used for bond networks}

Although our methods can in principle sample escape times from any
subnetwork of a CTMM graph, we have validated them here for the
specific case of breaking networks of bonds due to the importance of
this problem for self-assembly modeling. In rule-based models of
self-assembly, a simulation is initialized with a set of assembly
subunits, each with a complement of pre-specified binding sites.  As
the simulation progresses, the system evolves into a state with an
assembly of disjoint networks.  The binding interactions between two
disconnected pieces of the network usually occurs on a slower scale
than individual bond breaking reactions\cite{Zhang06}. For
bi-connected networks, however, the association rate within a
connected network is much larger than the bond breaking rate since
there is no entropy penalty in associating bonds between constituent
subunits. Such models allow for a natural partitioning of the state
space into subgraphs corresponding to the bi-connected components of
the entire network. The first set of experiments that we performed
were on such bi-connected networks. The simplest non-trivial example
of a bi-connected bond network is the graph generated by an $N$-cycle
($C_{N}$). More complicated networks of $N$ bonds can be viewed as
special cases of a truncated unit hypercube in $N$ dimensions. We
therefore carried out simulations for the network generated by $C_{N}$
as well as the full hypercube ($Z_{N}$).  Theorem \ref{SSAcon}
guarantees that the expected number of SSA steps for a $k$-connected
network of $d$ bonds is $P(d,k)\Omega(r^{k-1})$, where $P(d,k)$ is
some combinatorial function dependent on the topology of the network.

Each model is parameterized by a rate of bond formation, $a$, and a
rate of bond breaking, $b$.  These values were varied in different
simulations.  Each of the bonds had different binding/breaking rates
but the ratio was maintained at the same order of magnitude for each
simulation. Specifically, for a $d$ bond network $b_{\mu}= b(1.0 +
0.05\mu/d)$ and $a_{\mu}=a$.  These slight variations in rates from
bond to bond were used to avoid giving our methods an unfair advantage,
as they will generally be more efficient when the transition matrix has
degenerate eigenvalues.

\subsection{Experiments}

We conducted a series of simulations to determine the
performance of the SSA, Master Equation, and EMC methods for bond
network first-passage times.  All simulations were implemented in
Mathematica.  Run time simulations were executed on a Macintosh
machine with a 1.8GHz G5 processor and 512 MB RAM. For the EMC based
spectral method, we allowed each component of the state vector to
converge within a relative error of $\epsilon = 0.01$.  Each data
point reported was the average over 500 simulations except for run
time data, which were averaged over 100 simulations.

We first examined the efficiency of the Master Equation method by
assessing the number of rejection steps needed to sample each
first-passage time.  We carried out simulations for cycle graphs
($C_N$) varying the cycle length from 3 to 7 and the rate ratio $a/b$
from 1 to 20 in increments of 1.  These experiments were then repeated
for unit hypercubes ($Z_N$) with dimension varied from 2 to 5 and rate
ratio $a/b$ from 1 to 10 in increments of 1. For each condition, we
recorded the number of rejection steps required for each of 500
simulations and computed the mean and standard deviation across the
500 trials.

We next examined the number of steps required by the EMC method for
sampling times to network breakage.  We examined the same models as
those used to validate the Master Equation method: cycles of length 3 to 7
with rate ratios from 1 to 20 in increments of 1 and hypercubes of
dimension 2 to 5 with rate ratios from 1 to 10 in increments of 1.  We
similarly recorded the number of EMC steps required for each of
500 simulations and computed the mean and standard deviation across
the 500 trials.  We also computed the fraction of models that reached
the first passage time before relaxing to the slowest decay mode.

We next tested the total run time of each of the three methods on a
broader set of parameter ranges.  We evaluated run times for each
method for cycle networks of sizes 3 through 7.  We performed two sets
of evaluations for each.  The first set varied the rate ratio $a/b$
from 500 to 5000 in increments of 500 to provide a broad view of the
relative run times of the three methods.  These numbers span ranges of
values likely for protein assemblies.  For example, Zlotnick \emph{et
al.}\cite{Zlot99} have estimated a binding free energy of $\Delta G =
4.2$ kcal/mole for ODE based simulation of the kinetics of the
Hepatitis B virus, which yields $a/b=\exp{(\Delta G/RT)} \sim 1200$.
We then examined ratios of SSA to Master Equation and SSA to EMC run
times for each data point based on averages over 100 simulations per
parameter set.  In a second set of experiments, designed to give a
finer view of where each method is dominant in parameter space, we
varied the rate ratio $a/b$ from 30 to 300 in increments of 30.  We
then identified the most efficient of the three methods for each
point, again using averaged run times over 100 trials per data point.

We then performed analogous experiments for hypercube graphs in order
to test performance on networks with higher connectivity.  For each
graph $Z_{2}$ to $Z_{5}$, we carried out simulations for rate ratio
$a/b$ from 3 to 30 in increments of 3.  We were limited to small
ratios because the SSA method becomes prohibitively costly for
high-connectivity networks at higher ratios.  Each simulation was
repeated 100 times to yield average run times for each parameter set
and for each of the three methods.  For each parameter set, we
computed the ratio of run times for SSA versus Master Equation and
SSA versus EMC.  We further evaluated which of the three methods
produced the shortest average run time for each parameter set.

\subsection{Results}
We first present results on the efficiency of the rejection sampling
scheme for the Master Equation method.  The expected run time of
the method is proportional to the expected number of trials needed to
produce a successful sample.  A low number of steps is therefore
preferable, with a value of one being ideal.  Fig.~\ref{CNMEsteps}(a)
shows the rejection ratio for cycle graphs $C_3$ through $C_7$.  The
mean number of rejection steps is consistently below 1.5, as expected
from theorem \ref{DisMix} and \ref{RRatio}.  The number of rejection
steps drops with increasing rate ratio but increases with increasing
cycle length.  These results together establish the efficiency of the
method.  Fig.~\ref{CNMEsteps}(b) shows that the method is also robust,
with standard deviation consistently below 0.9 for the experiments
shown here.  The standard deviation also decreases with increasing
rate ratio but increases with cycle size.

Fig.~\ref{ZNMEsteps}(a) shows mean numbers of rejection steps for
hypercube graphs.  Since the envelope curve for hypercubes is not
exact, these experiments provide information about how well the method
performs for more general networks.  The hypercube graphs also yield
mean numbers of rejection steps consistently below 1.5. The number
of steps generally falls with increasing rate ratio. Fig.~\ref{ZNMEsteps}(b)
shows the method also to be robust for hypercube graphs, with standard
deviations consistently below 1.0 and following similar trends to the
means.

\begin{figure}
\centerline{\includegraphics[scale=0.7]{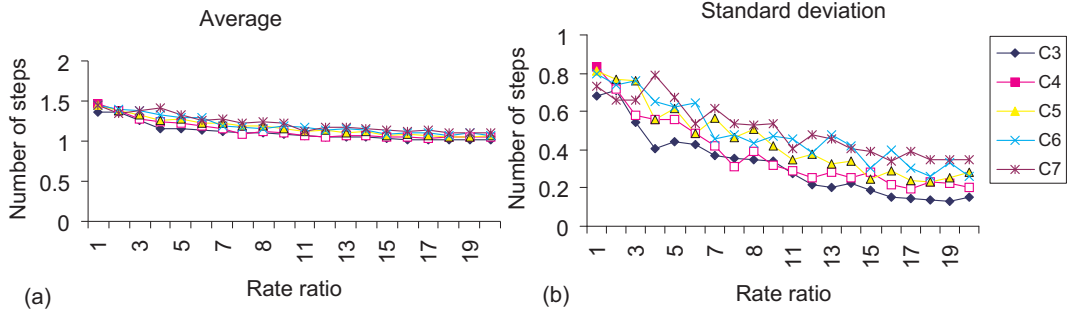}}
\caption{Number of Rejection steps for the Master Equation method until first passage for the network  generated by $C_{N}$ (a) Average number of steps $\langle s \rangle$ (b) Standard deviation $\surd\langle \delta s^{2} \rangle$}
\label{CNMEsteps}
\end{figure}

\begin{figure}
\centerline{\includegraphics[scale=0.7]{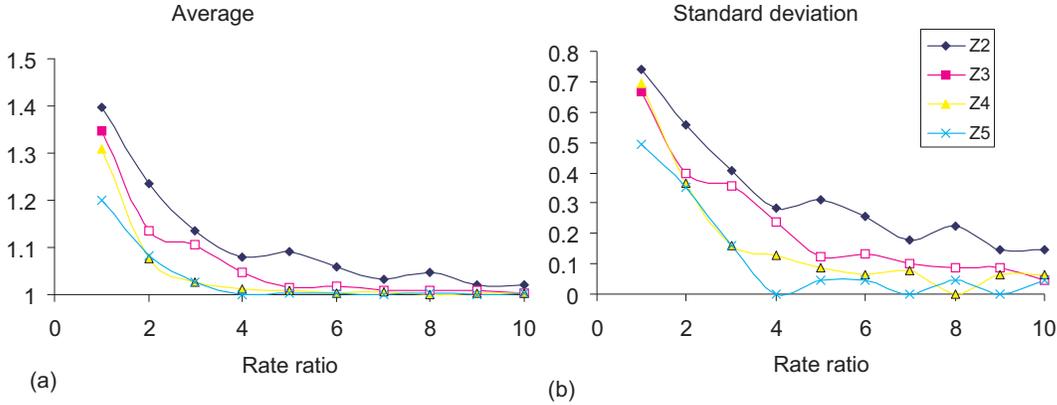}}
\caption{Number of Rejection steps for the Master Equation method until first passage for $Z_{N}$ (a) Average number of steps $\langle s \rangle$ (b) Standard deviation $\surd\langle \delta s^{2} \rangle$}
\label{ZNMEsteps}
\end{figure}

Next, we performed identical experiments to study the performance of
the EMC-based spectral method. Fig.~\ref{CNEMCsteps}(a) shows mean
numbers of EMC steps for cycle graphs.  The number of steps remains
consistently below 6.  The values rise sharply at the lowest rate
ratios, but quickly level off to approximately 4-5, depending on the
cycle length.  Figs.~\ref{CNEMCsteps} (b) and (c) provide the
explanation for this feature. For small rate ratio, multiple eigen
modes are responsible for the decay (see part (c)), which corresponds
to increasing EMC steps before first passage, similar to SSA. However,
as rate ratio increases further, relaxation time to the slowest eigen
mode becomes smaller than the average first passage time and the
method automatically samples breaking times according to the slowest
eigen mode (Fig \ref{CNEMCsteps}(c)). This feature is evident in part
(b) of the figures, which measure the standard deviation. At high rate
ratio the ``trajectory'' is almost deterministic, i.e., it always
takes the same number of steps to break the network. This happens
because the state almost always relaxes to the slowest eigen mode
before escaping the subgraph, hence giving a low value for $\sigma$ at high
rate ratio.

Fig.~\ref{ZNEMCsteps} shows comparable results for hypercube graphs.
Fig.~\ref{ZNEMCsteps}(a) shows that mean numbers of steps drop
substantially between ratios 1 and 2 but quickly level off to an
apparent constant for each graph.  The number of steps increases with
increasing hypercube dimension.  Figs.~\ref{ZNEMCsteps}(b) and (c)
again show that the method has high variability for low rate ratios,
where multiple eigen modes contribute significantly to the time
distribution and the method must behave similarly to the standard SSA.
At higher ratios, though, the slowest mode quickly dominates and the
number of steps required becomes highly reproducible.

\begin{figure}
\centerline{\includegraphics[scale=0.7000]{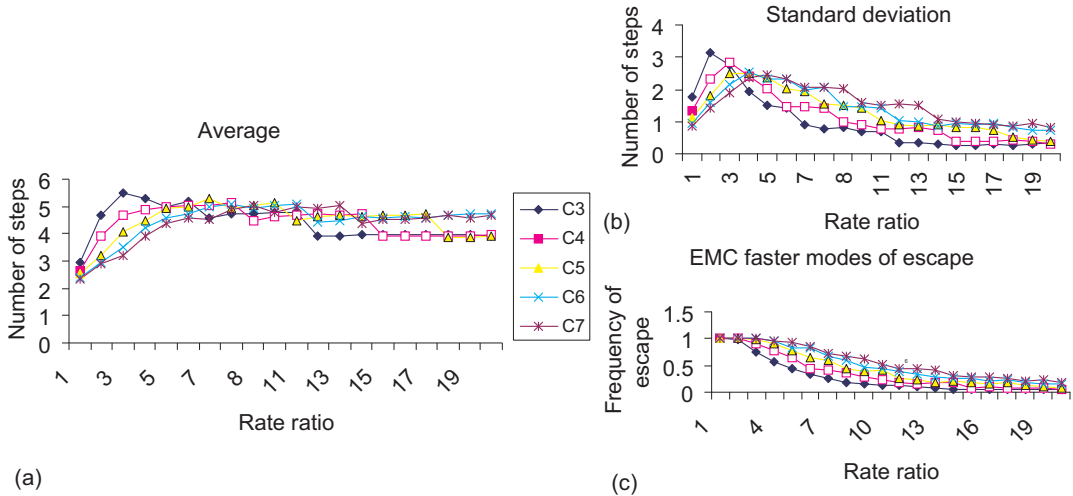}}
\caption{Number of EMC steps until first passage for the network  generated by $C_{N}$ (a) Average number of steps $\langle s \rangle$ (b) Standard deviation $\surd\langle \delta s^{2} \rangle$ (c) Fraction of times the trajectory escapes before relaxing to the slowest decay mode.}
\label{CNEMCsteps}
\end{figure}

\begin{figure}
\centerline{\includegraphics[scale=0.7000]{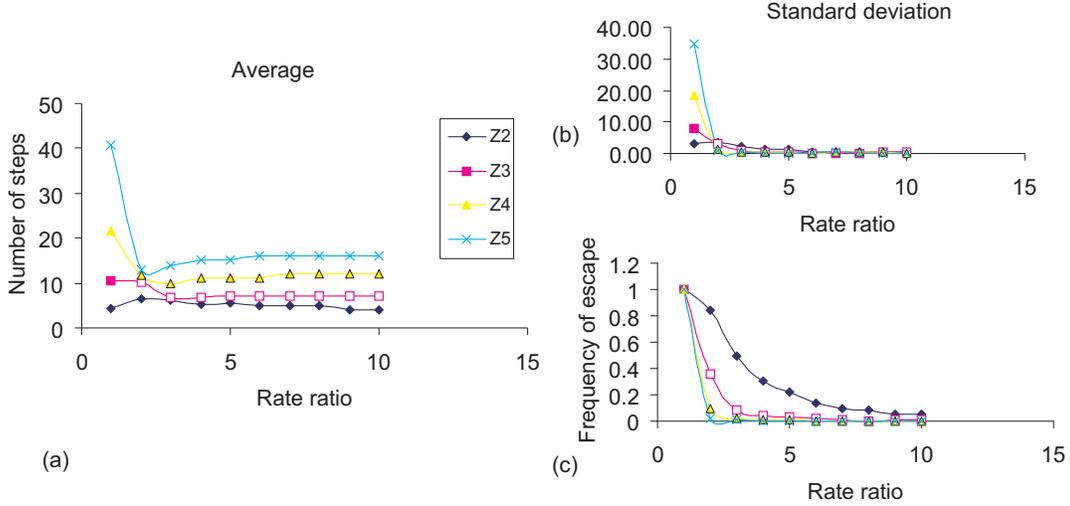}}
\caption{Number of EMC steps until first passage on $Z_{N}$ (a) Average number of steps $\langle s \rangle$ (b) Standard deviation $\surd\langle \delta s^{2} \rangle$(c) Fraction of times the trajectory escapes before relaxing to the slowest decay mode.}
\label{ZNEMCsteps}
\end{figure}

We next examined total run times of the three methods, beginning with
the cycle graphs $C_{3}$ to $C_{7}$.  Fig.~\ref{cyclefig} plots
results of the EMC and Master Equation methods relative to the basic
SSA. Fig.~\ref{cyclefig}(a) shows ratios of run times for standard SSA
to the Master Equation method.  The ratio grows rapidly with
increasing rate ratio, although it falls with increasing cycle size.
Fig.~\ref{cyclefig}(b) shows the comparison of SSA to the EMC method.
The SSA:EMC ratio likewise peaks for large rate ratios and small cycle
sizes.  The EMC method appears generally superior to the Master
Equation method, beginning to dominate at a lower rate ratio and
reaching a higher peak.  Fig.~\ref{cyclefig}(c) shows for a narrower
rate range where each of the three methods dominates.  The EMC method
is the fastest for most of the range examined, with the standard SSA
superior at the extreme of low ratios and large cycle sizes.

\begin{figure}
\includegraphics[scale=0.700]{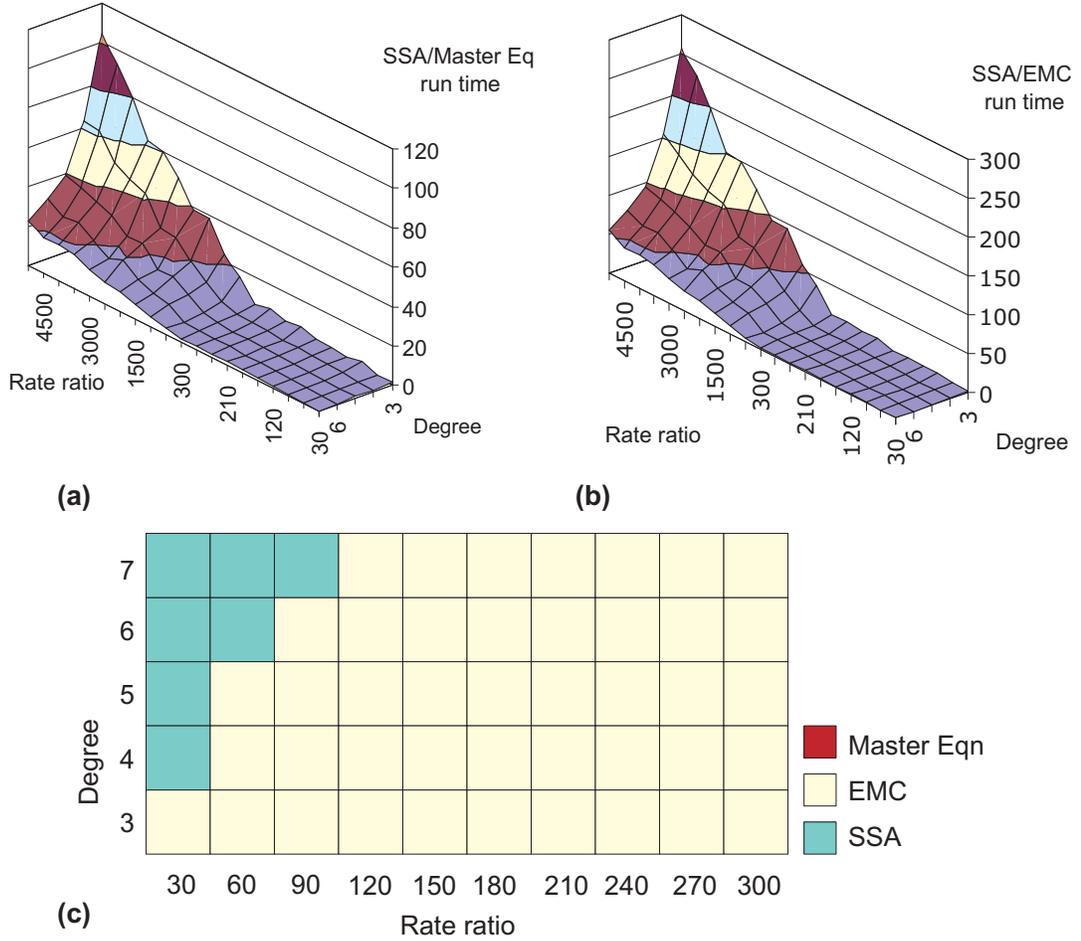}
\caption{Comparative run times for the network generated by $C_{N}$ (a) Ratio of SSA to Master Equation run times (b) Ratio of SSA to EMC run times (c) Region in 2D parameter space where each method is optimal }
\label{cyclefig}
\end{figure}
\begin{figure}
\centerline{\includegraphics[scale=0.700]{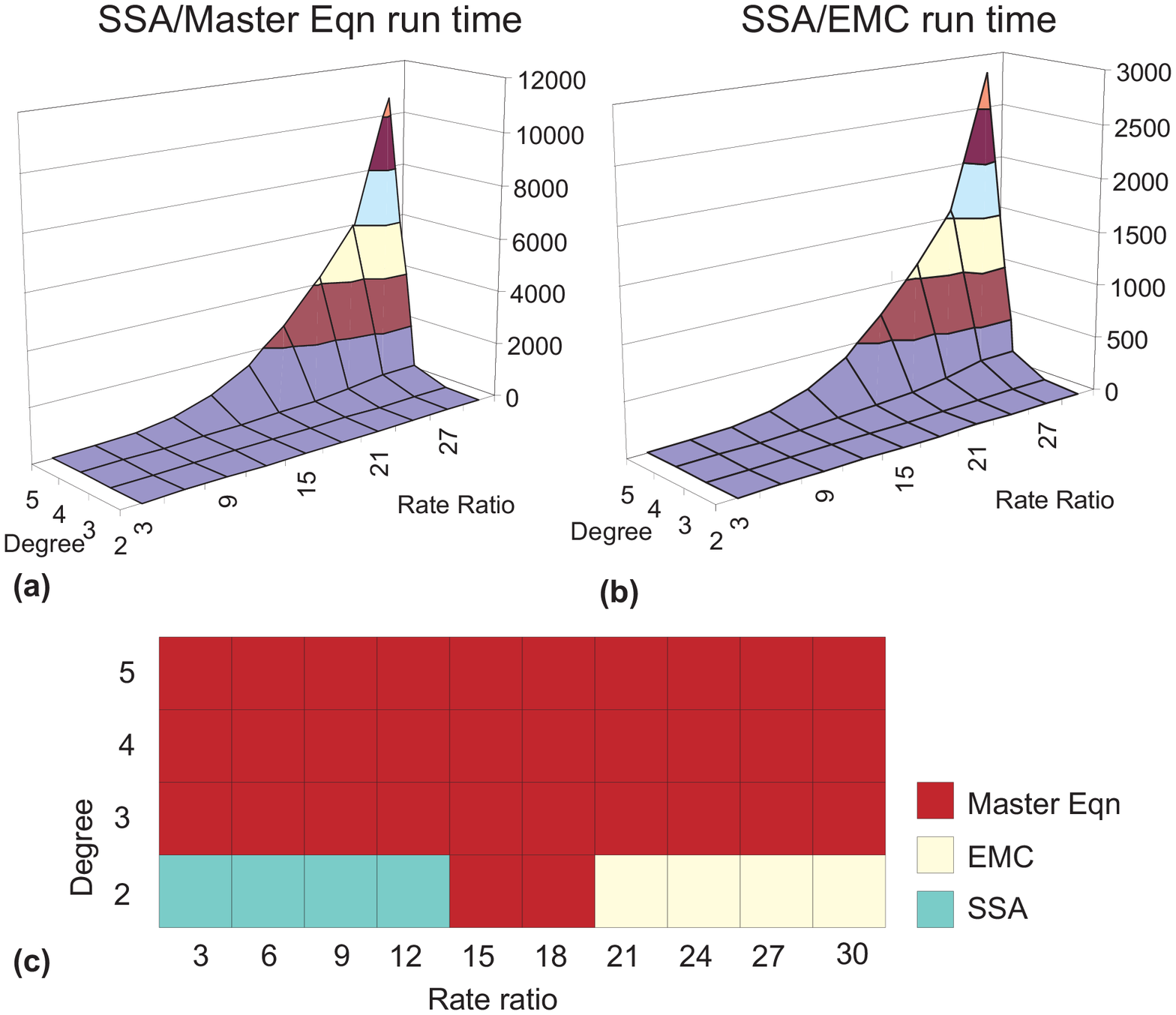}}
\caption{Comparative run times for first passage on $Z_{N}$ (a) Ratio of SSA to Master Equation run times (b) Ratio of SSA to EMC run times (c) Region in 2D parameter space where each method is optimal}
\label{cubefig}
\end{figure}

We then examined run times on the hypercube graphs $Z_2$ to $Z_5$.
Fig.~\ref{cubefig}(a) shows run time ratios for SSA versus the Master
Equation method and Fig.~\ref{cubefig}(b) for SSA versus the EMC
method.  Both spectral methods show sizable improvements over the
pure SSA method for larger rate ratios and higher hypercube
dimensions.  SSA appears much more sensitive to rate ratio as compared
to the spectral methods. Even for a rate ratio of 30, the spectral
methods were more than three orders of magnitude more efficient than
SSA for $Z_{5}$.  For hypercubes, unlike cycle graphs, the Master
Equation method appears generally more efficient than the EMC-based
method, even for small rate ratios.  Fig.\ref{cubefig}(c) shows where
each method is dominant.  The Master Equation method is dominant for
most of the parameter range examined, with the SSA method superior at
the limit of lowest degree and smallest rate ratios and the EMC
dominant for low degree and higher rate ratios.  This result is
expected from Fig. \ref{ZNSSAsteps}(a), since the average number of
steps seems to increase monotonically with the connectivity of the
graph for the EMC-based method. The efficiency of the Master Equation
method, on the other hand, depends primarily upon the size of the
complete graph, since matrix diagonalization is the eventual
efficiency bottleneck.

\section{\label{NuclLim}Application with Automated Discovery (AD): Nucleation-limited Assembly}
In this section, we apply the AD variants of the methods to a
different system type also motivated by self-assembly modeling.  The
rate of a self-assembly processes is often limited by the time
required to build the first stable multi-subunit complex, called a
nucleus, which then acts as a seed for assembly of the rest of a
larger structure.  Because partially formed nuclei are unstable,
considerable trial-and-error may be needed before one reaches
completion.  The time to complete a single nucleus can thus be orders
of magnitude longer than the inter-subunit binding rate.  These
nucleation-limited assembly systems are one example of the broader
class of stiff models for chemically reacting species.  The state
space of any such a system can be represented as a lattice
corresponding to the populations individual species.  These models are
similar to those treated in earlier studies of accelerated SSA methods
\cite{All05,Cao05}.  We apply one such model, representing the
formation of simple trimeric nuclei, to demonstrate and evaluate the
AD variants of the spectral methods.

\subsection{Integer lattice models}
The second model we consider is again an assembly of bond networks where monomers ($m$) with two identical binding sites combine to form dimers ($d$) and trimers ($t$). In order to show stiffness with respect to a single parameter, the trimers were assumed to be completely stable. If the total number of monomer subunits is $N$, the state space is the intersection of the plane $N_{m} + N_{d} + N_{t}= N$ with the positive octant of the 3 dimensional lattice formed by integer counts of the monomer ($N_{m}$), dimer ($N_{d}$) and trimer ($N_{t}$) populations. Let us represent each vertex of this graph by the pair $(N_t, N_d)$. The reaction propensities $\alpha_{N_t, N_d}^{N'_t,N'_d}$ to reach the vertex $(N'_t, N'_d)$ from $(N_t,N_d)$ are (to within an overall constant)
\begin{eqnarray}
\alpha_{N_t, N_d}^{N_t,N_d +1}&=& N_m(N_m-1)/v\\
\alpha_{N_t, N_d}^{N_t,N_d-1}&=& N_d \\
\alpha_{N_t, N_d}^{N_t+1,N_d-1}&=& N_m N_d/v \\
\alpha_{N_t, N_d}^{N_t-1,N_d+1}&=& 0
\end{eqnarray}
where  $1/v$ is an entropy penalty due to the finite volume of the system. We initialize the system at the state $(0,0)$ for a given monomer count $N$ and sample the first passage time until the trimer count reaches a given value. This system will show stiffness if the parameter $\rho\equiv N/v$ is small. For small $\rho$, which corresponds to low concentration and/or small binding energy, trimer formation will be much slower than dimer breaking/binding reactions.

\subsection{\label{latticeAD}Automated Discovery for integer lattice}
Efficient simulation over an integer lattice, where one pair of species react on a much faster timescale than the others requires a partitioning of the entire lattice into subgraphs with fixed trimer count (since trimer formation occurs on a much longer timescale than monomer-dimer reactions). These subgraphs are simple paths with vertex set $V(N_t)=\{(N_t,0),(N_t,1),\ldots (N_t, [(N-3N_t)/2])\}$, where $N_t$ represents the fixed trimer count and square brackets represent the largest integer smaller than the enclosed expression. Fig. \ref{Pathpcode} presents a procedure for implementing automated discovery on such graphs which works in optimal time, to within a small constant factor (the vertices are represented by dimer count for simplicity).
\begin{figure}
\includegraphics[scale=0.9]{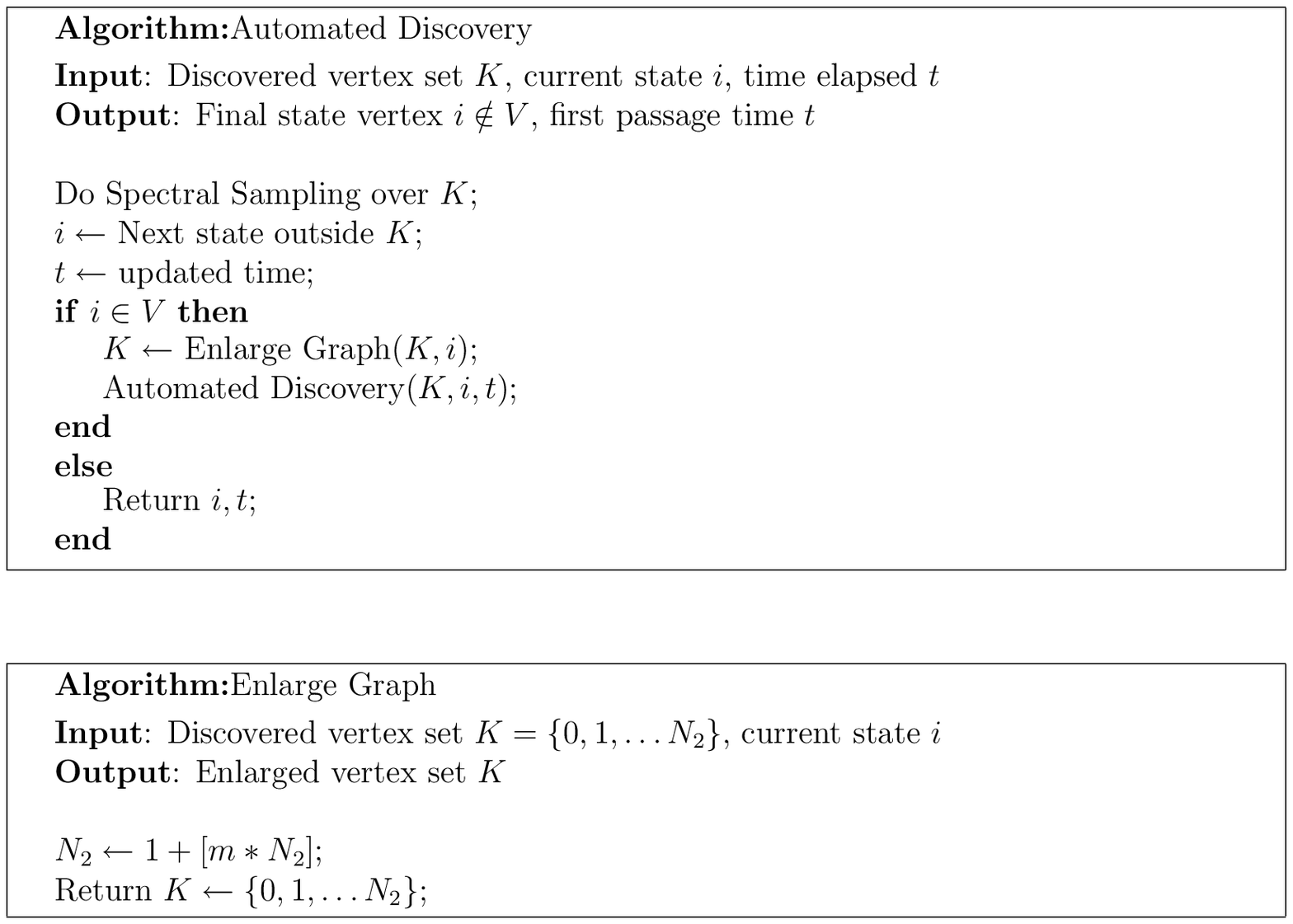}
\caption{Pseudocode for Automated Discovery for a simple path}\label{Pathpcode}
\end{figure}
At each step of automated discovery the method enlarges the graph by a factor proportional to its present length. The scale factor $m$ can be optimized for any given sampling algorithm to optimize run time. For example, if a given spectral sampling algorithm works in time $f(N) = N^\alpha$, where $N$ is the cardinality of the vertex set; total number of steps used in automated discovery of a graph sized $N_i$ can be bounded from above by the following quantity $S_{N_i}$,
\begin{eqnarray}
S_{N_i}&=&\sum_{n=1}^{1+\log_m{[N_i]}}m^{n\alpha} \leq \frac{N_i^{\alpha}m^{2\alpha} -1}{m^\alpha -1} \nonumber\\
&\leq& C N_{i}^{\alpha} \textrm{ for } m = 2^{1/(\alpha+1)}
\end{eqnarray}
where, $C= 2^{2\alpha/\alpha +1}/(2^{\alpha/\alpha+1} -1)$. In general
the Master Equation based method works in $O(N^3)$, however for simple
paths the Kolmogorov matrix is sparse and effective power is expected
to be more like $\alpha =2$. Since $S_{N_i}$ is more sensitive to
deviations for smaller values of $m$, we chose $m=1.3 >2^{1/3}$ in the
experiments reported here.

The method reported here can in principle be generalized for arbitrary lattice graphs in $d$ dimensions. The size of the discovered graph in such cases would overestimate the actual trapped graph by a factor of $m^d$, for a scaling factor $m$. For small dimensions, this may still be more efficient than the method discussed in section \ref{AD} which exactly samples the trapped subgraph.

\subsection{Experiments}
We performed two sets of experiments to compare the performance of
spectral methods with SSA. The first set of experiments compared the
Master Equation method implemented in conjunction with Automated
Discovery for the trimer model with SSA. Each experiment compared the
ratio of run times for sampling first passage times to reach a trimer
count $N_t=100$, starting from an initial monomer count $N$. The state
space was partitioned into subgraphs corresponding to fixed trimer
counts and AD was used to identify the trapped
regions for spectral decomposition. We then performed a total of 50
comparative run time simulations varying $N$ from 1000 to 9000 in
steps of 2000 and varying $\rho$ from $10^{-5}$ to $1.9\times 10^{-4}$
in steps of $2.0\times 10^{-5}$. All run times were averaged over 50
samples. The scale factor for AD was set at 1.3.  The
second set of experiments compared the run time ratio for the EMC
based method and SSA for first passage time to reach a trimer count
$N_t=100$, starting from an initial monomer count $N$.  The state
space was again partitioned into subgraphs of fixed trimer
counts. AD was not required for these simulations
since the method automatically selects the trapped region of the
subgraph according to the evolving probability distribution. We then
performed a total of 25 comparative run time simulations varying $N$
from 1000 to 9000 in steps of 2000 and varying $\rho$ from $10^{-4}$
to $0.9\times 10^{-3}$ in steps of $2.0\times 10^{-4}$. All run times
were averaged over 50 samples. For the spectral method, each component
of the slowest eigenvector was allowed to relax to within a relative
error of 0.01 and an absolute error of $1.0\times 10^{-6}$.

\subsection{Results}
\begin{figure}
\centerline{\includegraphics[scale=0.900]{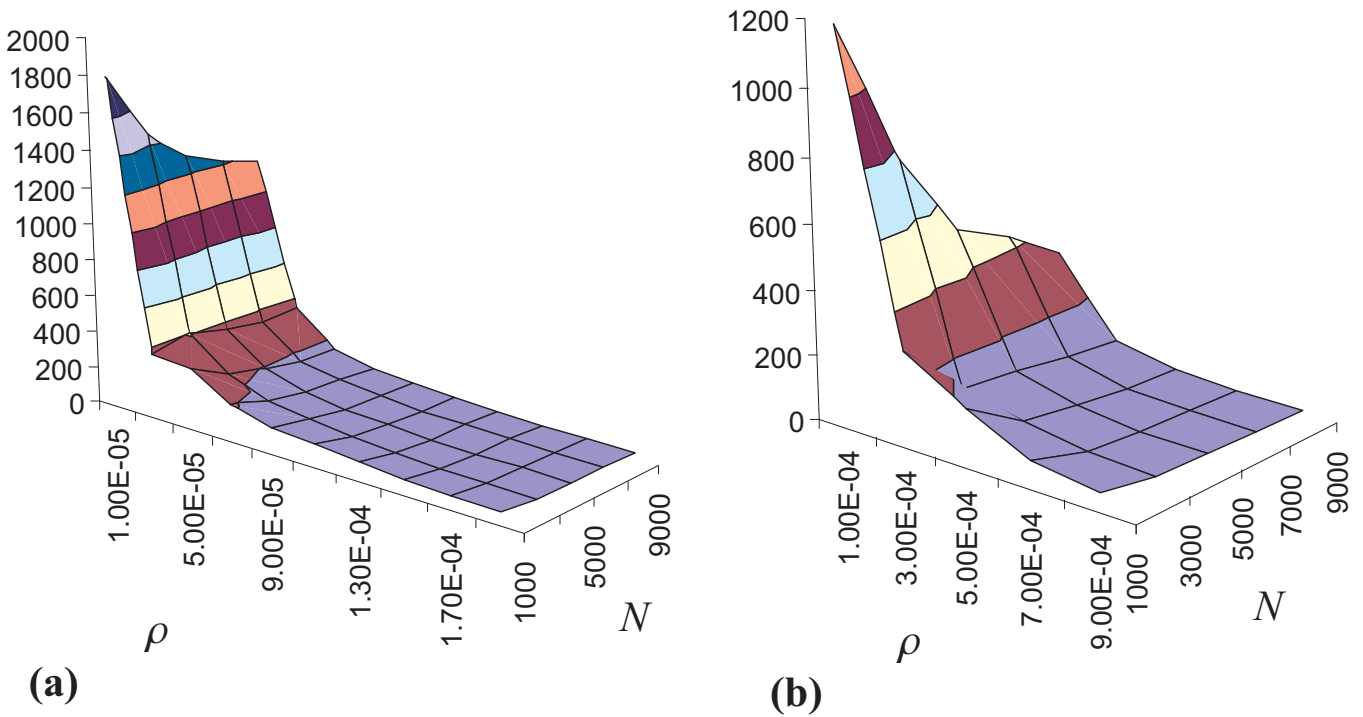}}
\caption{(a) Comparative run times for first passage for 10 trimer counts. Ratio of SSA to ME-AD run times  (b) Comparative run times for first passage for 100 trimer counts. Ratio of SSA to EMC-AD run times}
\label{latticefig}
\end{figure}

We first present results for the Master Equation method run times as
we vary the stiffness parameter $\rho$. Fig \ref{latticefig}(a) shows
the behavior for 5 different initial monomer counts $N$. Small $\rho$
values correspond to a stiff model, since the average dimer count
varies approximately as $N_d \sim \rho N_m$ and the ratio of the rate
of dimer formation to trimer formation varies as $\sim N_m/N_d$.
Fig.~\ref{latticefig}(a) demonstrates the efficiency of the Master
Equation method.  The method shows large gains in the domain of small
$\rho$ and small $N$, with relative performance dropping rapidly with
increasing $\rho$ and more slowly with increasing $N$.  Next we
present results for the ratio of SSA/EMC method run times as we vary
$\rho$ and $N$. Fig.~\ref{latticefig}(b) shows the behavior for 5
different initial monomer counts $N$.  The EMC method is effective at
substantially larger values of $\rho$ than is the Master Equation
method.  As is to be expected, the spectral methods do not scale as
well as the usual SSA method with increasing $N$.  Even for relatively
large networks, though, the performance gain obtained by spectral
sampling is appreciable. The reason for this is that for most cases
the slowest eigen mode is reasonably well approximated by a vector
populating only a small fraction of the subgraph vertices. As a result
we can look at the EMC method as a generalization of other accelerated
sampling schemes which only use one vertex, the mean value of the
slowest decay mode, as in the PEA based methods.

\section{\label{discussion}Discussion}

We have investigated the problem of efficiently simulating stochastic
reaction models and introduced two methods for accelerating sampling
on problems characterized by multiple time scales.  Both methods are
based on spectral analysis of CTMMs equivalent to the SSA model.  We
have applied these methods in the present work to two special cases of
these models that are important to simulations of molecular
self-assembly: sampling times to break multiply-connected bond
networks and simulating growth in nucleation-limited assembly systems.
Collectively, these two applications demonstrate the use of the
proposed spectral methods on small CTMM graphs known {\em a priori}
and on automatically discovered subgraphs of large CTMMs.  We have
shown theoretically and empirically that the new methods are
substantially more robust to variations in the ratios of reaction
rates than is the basic SSA method for these problems.

While we have applied these methods here to models used in
self-assembly simulations, the basic methods can be expected to have
much broader application.  Both methods can be applied to sample first
passage times for any arbitrary subset of states of any SSA CTMM
graph.  Both can also be applied to sample escape times from any
subgraph of such a graph, using automated discovery to identify
``trapped'' regions of the CTMM graph.  The latter distinction is
important because CTMM graphs for complicated biological systems are
generally far too large to represent explicitly.  These spectral
methods might be extended to incorporate ``on the fly'' graph
construction techniques, like those used by rule-based methods widely
used for SSA simulations~\cite{Moleculizer,BioNetGen}. The EMC method,
especially, would seem to be a candidate for such an extension. For
example, if at each iteration, instead of adding all the possible next
neighbors to the system state, we add only a subset of them depending
upon their transition probabilities then we will get a natural,
\emph{non-local} generalization of the SSA.  Such an approach could
provide a precise and general method for pruning full SSA graphs to
achieve more efficient pathway sampling in extremely large state
spaces.

\begin{acknowledgments}  This work was supported in part by a U.S. National Science Foundation award \# 0346981.
\end{acknowledgments}

\appendix
\section{}
In this appendix, we prove Thm.~\ref{SSAcon}, which helps us establish
the relative efficiency of the spectral methods to the standard SSA.
Before we prove the theorem, we need to establish some preliminary
results.  Let $r \equiv Min(a_{\mu}/b_{\nu}|\mu ,\nu \in\{1,\ldots
d\})$. To construct the transition matrix $Q$, SSA identifies the
negative of the diagonal element of the Kolmogorov matrix $ -
W_{\bm{n},\bm{n}}= \left(\sum_{\beta} a_{\beta}n_{\beta} + (1-
n_{\beta})b_{\beta}\right)$ as the inverse of the mean waiting time at
each SSA step and the matrix $ L_{\bm{m},\bm{n}} =
-W_{\bm{m},\bm{n}}/W_{\bm{n},\bm{n}}$ as the graph
Laplacian. Therefore, $Q_{\bm{m},\bm{n}} = \delta_{\bm{m},\bm{n}} -
L_{\bm{m},\bm{n}}$.  Since SSA simulates a periodic Markov chain $Q$,
the graph is bipartite and the two step chain $Q^{2}$ is reducible
into $Q_{even}^{2} \oplus Q_{odd}^{2}$. Here, $Q_{even}^{2}$ is the
projection of $Q^{2}$ over the subspace of states with an even number
of bonds broken $V_{even}$ and $Q_{odd}^{2}$ is the projection over
$V_{odd}= S - V_{even}$. Since both $Q_{even}^{2}$ and $Q_{odd}^{2}$
are irreducible and aperiodic, the ergodic theorem applies to each one
separately and if $|\Pi\rangle = \sum_{i\in
S}\pi_{\bm{i}}|\bm{i}\rangle$ is the eigenvector of $Q$ with
eigenvalue 1, the vectors $|\Pi_{e}\rangle = \sum_{\bm{i}\in
V_{even}}\pi_{\bm{i}}|\bm{i}\rangle$ and $|\Pi_{o}\rangle =
\sum_{\bm{i}\in V_{odd}}\pi_{\bm{i}}|\bm{i}\rangle$ are the
equilibrium distributions for $Q_{even}^{2}$ and $Q_{odd}^{2}$,
respectively, up to a normalization constant. To bound the mean
hitting time $T_{\bm{b}\bm{0}}$, from $\bm{0}$ to the set $V_{b}$, we
first apply the common technique of constructing another graph with
vertex set $\bar{V}=V_{c} \bigcup \{\bm{b}\}$, where all vertices in
$V_{b}$ are truncated to a single vertex $\bm{b}$ and
$V_{c}=S-V_{b}$. The edge weights for edges from $\bm{i}\in V_{c}$ to
$\bm{b}$ are chosen as $Q_{\bm{b},\bm{i}}= \sum_{\bm{j}\in
V_{b}}Q_{\bm{j},\bm{i}}$, which will leave $T_{\bm{b}\bm{0}}$
unchanged from that of the original graph. We must further specify the
edge weights from $\bm{b}$ to any states with $k-1$ broken bonds. In
order to ensure that the Markov chain still obeys detailed balance, we
require that $Q_{\bm{i},\bm{b}}/Q_{\bm{j},\bm{b}}=
(Q_{\bm{b},\bm{i}}*\pi_{\bm{i}})/(Q_{\bm{b},\bm{j}}*\pi_{\bm{j}})$ and
$\sum_{\bm{i}\neq\bm{b}}Q_{\bm{i},\bm{b}} = 1$.  The resulting
modified graph will then have the same hitting time $T_{\bm{b}\bm{0}}$
as the original graph.
 
We next need three auxiliary results about properties of the resulting graph in order to prove our main theorem.
\begin{lemma}\label{Ctime}For an ergodic Markov chain, the cover time $C_{\bm{i}\bm{j}}\equiv T_{\bm{i}\bm{j}} + T_{\bm{j}\bm{i}}$ between any two states $\bm{i}$ and $\bm{j}$ satisfies~\cite{Aldous}
\begin{equation}
E[C_{\bm{i}\bm{j}}]=E[T_{\bm{i}\bm{j}}] + E[T_{\bm{j}\bm{i}}] = 1/(\pi_{\bm{j}}Pr[T_{\bm{j}\bm{j}}>T_{\bm{i}\bm{j}}])
\end{equation}
\end{lemma}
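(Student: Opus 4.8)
The plan is to derive the identity by conditioning a stationary trajectory of the chain on its successive visits to the state $\bm{j}$. Let $\pi$ be the stationary distribution, and recall the elementary fact that the expected return time to $\bm{j}$ satisfies $E[T_{\bm{j}\bm{j}}]=1/\pi_{\bm{j}}$. First I would set up a renewal structure: run the chain in stationarity and mark the successive visits to $\bm{j}$; the inter-visit intervals are i.i.d.\ copies of $T_{\bm{j}\bm{j}}$. Within one such excursion from $\bm{j}$ back to $\bm{j}$, exactly two things can happen --- either the excursion visits $\bm{i}$ before returning to $\bm{j}$, or it does not. I would then observe that the cover time $C_{\bm{i}\bm{j}}=T_{\bm{i}\bm{j}}+T_{\bm{j}\bm{i}}$, started from $\bm{j}$, is precisely the duration from one visit to $\bm{j}$ until the first subsequent visit to $\bm{j}$ \emph{that has been preceded by a visit to $\bm{i}$}; equivalently, it is the sum of a geometric number of return excursions to $\bm{j}$ (those that miss $\bm{i}$) plus one final excursion that does hit $\bm{i}$.

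The key computation is then a Wald-type identity. Let $q\equiv \Pr[T_{\bm{j}\bm{j}}>T_{\bm{i}\bm{j}}]$ be the probability that a single excursion from $\bm{j}$ reaches $\bm{i}$ before returning to $\bm{j}$ (note $T_{\bm{i}\bm{j}}$ here is the hitting time of $\bm{i}$ started from $\bm{j}$; the event $\{T_{\bm{j}\bm{j}}>T_{\bm{i}\bm{j}}\}$ is exactly ``this excursion hits $\bm{i}$''). The number $K$ of independent excursions needed until the first success is geometric with mean $1/q$, and each excursion --- success or failure --- contributes an expected duration $E[T_{\bm{j}\bm{j}}]=1/\pi_{\bm{j}}$ by the strong Markov property and the fact that all excursions have the same law regardless of whether they hit $\bm{i}$ when we count the full return. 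Applying Wald's identity to the sum of $K$ i.i.d.\ excursion lengths gives
\begin{equation}
E[C_{\bm{i}\bm{j}}]=E[K]\cdot E[T_{\bm{j}\bm{j}}]=\frac{1}{q}\cdot\frac{1}{\pi_{\bm{j}}}=\frac{1}{\pi_{\bm{j}}\,\Pr[T_{\bm{j}\bm{j}}>T_{\bm{i}\bm{j}}]},
\end{equation}
which is the claimed formula. The decomposition $E[C_{\bm{i}\bm{j}}]=E[T_{\bm{i}\bm{j}}]+E[T_{\bm{j}\bm{i}}]$ follows by the strong Markov property at the first hitting time of $\bm{i}$: the portion of the cover-time path up to first hitting $\bm{i}$ has expected length $E[T_{\bm{j}\bm{i}}]$, and the remainder, which starts afresh at $\bm{i}$ and runs until hitting $\bm{j}$, has expected length $E[T_{\bm{i}\bm{j}}]$.

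The main obstacle I anticipate is the bookkeeping in the Wald/renewal step: one must be careful that the ``final'' successful excursion is counted as a full return to $\bm{j}$ (so that all $K$ excursions are genuinely i.i.d.\ with mean $1/\pi_{\bm{j}}$) rather than truncated at the first visit to $\bm{i}$, and that the geometric count $K$ is independent of the excursion lengths in the way Wald's identity requires --- this independence holds because whether an excursion hits $\bm{i}$ and how long the excursion lasts are jointly determined by that excursion alone, and distinct excursions are independent. An alternative route that avoids excursion-counting altogether is to invoke the standard identity $E[T_{\bm{i}\bm{j}}]+E[T_{\bm{j}\bm{i}}]=\bigl(\text{stationary quantities}\bigr)$ derived from the fundamental matrix, but the renewal argument above is self-contained and is the approach I would present; I would simply cite \cite{Aldous} for the underlying excursion-decomposition facts to keep the proof short.
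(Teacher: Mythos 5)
Your proof is correct. Note that the paper itself supplies no argument for this lemma---it is quoted directly from the Aldous--Fill reference---so there is no internal proof to compare against; your renewal/Wald derivation is in fact the standard proof of this commute-time identity found in that reference: decompose the trajectory started at $\bm{j}$ into i.i.d.\ return excursions at $\bm{j}$, observe that the commute $T_{\bm{i}\bm{j}}+T_{\bm{j}\bm{i}}$ ends exactly at the close of the first excursion that visits $\bm{i}$, and combine Kac's formula $E[T_{\bm{j}\bm{j}}]=1/\pi_{\bm{j}}$ with Wald's identity for the geometric number $K$ of excursions, $E[K]=1/\Pr[T_{\bm{j}\bm{j}}>T_{\bm{i}\bm{j}}]$. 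Two small points of care: (i) your remark that each excursion contributes mean $1/\pi_{\bm{j}}$ ``regardless of whether it hits $\bm{i}$'' should not be read as a statement about conditional excursion lengths (the successful excursion is typically longer than average); what Wald's identity actually requires is only that $\{K\ge n\}$ is determined by excursions $1,\dots,n-1$ and is therefore independent of the length of excursion $n$---which is essentially the justification you give, so the argument stands, but the phrasing ``$K$ is independent of the excursion lengths'' is stronger than what is true or needed; (ii) in your closing sentence the labels $T_{\bm{i}\bm{j}}$ and $T_{\bm{j}\bm{i}}$ are swapped relative to the paper's convention (in the paper $T_{\bm{i}\bm{j}}$ denotes the time to hit $\bm{i}$ starting from $\bm{j}$, as you yourself state earlier), a harmless slip since only their sum enters the identity.
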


\begin{lemma}\label{Qbounds} The transition matrix $Q$ satisfies the following conditions:
\begin{enumerate}
\item If $\bm{i}$ and $\bm{j}=\bm{i} + \hat{\bm{\mu}}$ are two neighboring states with $n$ and $n+1$ bonds broken, respectively, then $Q_{\bm{j},\bm{i}}\leq (n*r)^{-1}$ for any $n>0$.
\item For any initial state $\bm{i}$ containing $n$ broken bonds, the $n$-step transition probability to $\bm{0}$ is bounded from below by $Q_{\bm{0},\bm{i}}^{n}\geq (1+d/r)^{-n}$. 
\item Let $T$ be any stopping time for the transition matrix $Q$ with expectation value $E[T]=\sum_{n=1} n Pr[T=n]=\sum_{n=0}Pr[T>n]$. For any integer $l>1$ consider the expectation value of $T$ for the $l$-step transition matrix $Q^{l}$ defined as $E^{(l)}[T] = \sum_{n} n Pr[(n-1)*l< T \leq n*l] =\sum_{n=0} Pr[T > n*l]$. Then, $l(E^{(l)}[T]-1)\leq E[T] \leq l E^{(l)}[T]$.
\end{enumerate}
\end{lemma}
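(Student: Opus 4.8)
The plan is to dispatch the three parts in turn, each by an elementary estimate on the SSA transition probabilities $Q_{\bm m,\bm n}=W_{\bm m,\bm n}/(-W_{\bm n,\bm n})$, where $-W_{\bm n,\bm n}=\sum_\beta\bigl(a_\beta n_\beta+(1-n_\beta)b_\beta\bigr)$ is the total escape rate out of $\bm n$. For (1), note that if $\bm i$ has $n$ broken bonds then $-W_{\bm i,\bm i}$ contains exactly $n$ association terms $a_\beta$ (one per broken bond) plus a nonnegative remainder, so $-W_{\bm i,\bm i}\ge\sum_{\beta:\,i_\beta=1}a_\beta$; each such $a_\beta$ satisfies $a_\beta\ge r\,b_\mu$ by the definition $r=\mathrm{Min}(a_\mu/b_\nu)$, hence $-W_{\bm i,\bm i}\ge n\,r\,b_\mu$. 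Since the numerator $W_{\bm j,\bm i}=b_\mu$ is the breaking rate of bond $\mu$, we obtain $Q_{\bm j,\bm i}=b_\mu/(-W_{\bm i,\bm i})\le 1/(nr)$. This is a one-line calculation.

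For (2), the idea is to bound not the chance of repairing a \emph{particular} bond but the chance of making \emph{some} repair move, i.e.\ one decreasing the broken-bond count by one. From a state with $k\ge1$ broken bonds this probability equals $\bigl(\sum_{\beta\text{ broken}}a_\beta\bigr)/(-W)$; in the denominator $\sum_{\beta\text{ intact}}b_\beta\le d\,\max_\nu b_\nu$, while $\sum_{\beta\text{ broken}}a_\beta\ge r\,\max_\nu b_\nu$ since any single broken bond $\beta$ already gives $a_\beta\ge r\,b_\nu$ for every $\nu$. Writing $A=\sum_{\beta\text{ broken}}a_\beta$ and using $d\,\max_\nu b_\nu\le dA/r$ gives $A/(A+d\,\max_\nu b_\nu)\ge A/(A+dA/r)=(1+d/r)^{-1}$. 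Because $\bm 0$ is the unique fully-connected configuration, ``$0$ broken bonds'' is the same event as ``at $\bm 0$''; so from any $\bm i$ with $n$ broken bonds a run of $n$ consecutive repair moves reaches $\bm 0$ in exactly $n$ steps with probability at least $(1+d/r)^{-n}$, which lower-bounds $(Q^n)_{\bm 0,\bm i}$. (Such a path stays among connected configurations, so it is immaterial whether one works in the original graph or the one with the collapsed vertex $\bm b$.)

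For (3), I would use the identity $E[T]=\sum_{m\ge0}\Pr[T>m]$ valid for every nonnegative integer random variable, together with its analogue $E^{(l)}[T]=\sum_{m\ge0}\Pr[T>ml]=E\bigl[\lceil T/l\rceil\bigr]$. From the elementary sandwich $T\le l\lceil T/l\rceil\le T+l-1$ and monotonicity of the expectation one gets $E[T]\le l\,E^{(l)}[T]$ on the one hand, and $l\,E^{(l)}[T]\le E[T]+l-1<E[T]+l$ on the other, the latter rearranging to $l\bigl(E^{(l)}[T]-1\bigr)\le E[T]$. (Equivalently, group the tail sum $\sum_{m\ge0}\Pr[T>m]$ into consecutive blocks of length $l$ and bound each block above by $l\,\Pr[T>ml]$ and below by $l\,\Pr[T>(m+1)l]$, using that $m\mapsto\Pr[T>m]$ is nonincreasing.)

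Of the three, I expect (2) to be the only one requiring a moment's thought: the naive attempt to track repair of a fixed bond fails because $r$ may be large, so one must instead track the aggregate repair probability and exploit the uniqueness of the target $\bm 0$. Parts (1) and (3) are routine once the escape-rate formula and the tail-sum identity are in hand.
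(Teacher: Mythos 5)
Your proposal is correct and follows essentially the same route as the paper: part (1) is the same one-line bound on the escape rate, part (2) rests on exactly the aggregate-repair estimate $A/(A+\sum_{\text{intact}}b_\beta)\geq(1+d/r)^{-1}$ that the paper establishes in its induction step (you phrase it as chaining $n$ conditional repair probabilities, the paper as an explicit induction on the $n$-step matrix element), and part (3) is the paper's block-grouping of the tail sum, repackaged via $E^{(l)}[T]=E[\lceil T/l\rceil]$ and the sandwich $T\leq l\lceil T/l\rceil\leq T+l-1$. No gaps.
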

\begin{proof}

\begin{enumerate}
\item  The transition probability corresponding to the matrix element connecting $\bm{i}$ to $\bm{j}$ is : 
\begin{equation}
Q_{\bm{j},\bm{i}}=\frac{b_{\mu}}{(\sum_{\beta} a_{\beta}i_{\beta} + (1- i_{\beta})b_{\beta})}\nonumber \leq
\left\{\begin{array}{cc}
(n*r)^{-1} & \textrm{ $\forall$ }\bm{i} \neq \bm{0} \nonumber \\
1 & \textrm{ if } \bm{i}  = \bm{0}\\
\end{array}\right.
\end{equation}
\item First consider any state $\hat{\bm{\mu}}$ with one bond broken: 
\begin{equation}
Q_{\bm{0},\hat{\bm{\mu}}}=\frac{a_{\mu}}{a_{\mu} +\sum_{\nu\neq\mu}b_{\nu}}\geq(1+d/r)^{-1}
\end{equation}
Assume $Q_{\bm{0},\sum_{i=1}^{n}\hat{\bm{\mu}}_{i}}^{n} \geq (1+d/r)^{-n}$ for all $n$ broken bond states $\sum_{i= 1}^{n}\hat{\bm{\mu}}_{i}$, then 
\begin{eqnarray}
Q_{\bm{0},\sum_{i=1}^{n+1}\hat{\bm{\mu}}_{i}}^{n+1}&=& \sum_{\nu\in\{\mu_{1},\ldots,\mu_{n+1}\}}Q_{\bm{0},\sum_{\mu_{i}\neq\nu}\hat{\bm{\mu}}_{i}}^{n}
Q_{\sum_{\mu_{i}\neq\nu}\hat{\bm{\mu}}_{i},\sum_{i=1}^{n+1}\hat{\bm{\mu}}_{i}}\nonumber \\
&\geq &(1+d/r)^{-n}\sum_{\nu\in\{\mu_{1},\ldots,\mu_{n+1}\}}\frac{a_{\nu}}{\sum_{i=1}^{n+1}a_{\mu_{i}} + \sum_{\eta\neq\{\mu_{i}\}}b_{\eta}}\nonumber \\
&\geq &(1+d/r)^{-n-1}
\end{eqnarray}
Since $Q_{\bm{0},\hat{\bm{\mu}}}>(1+d/r)^{-1}$, the assertion holds for all $n>1$ by induction.
\item We can prove the upper bound as follows:
\begin{eqnarray}
E[T]&=& \sum_{n=1} \left( \sum_{m=1}^{l} ((n-1)l +m)Pr[T=((n-1)l +m)]\right)\nonumber \\
&\leq& \sum_{n=1} n*l\left(\sum_{m=1}^{l}Pr[T=((n-1)l +m)]\right)\nonumber \\
&\leq& l\sum_{n=1} n Pr[ (n-1)l< T \leq n*l] \nonumber \\
&\leq& l E^{(l)}[T]\nonumber \\
\end{eqnarray}
We can similarly prove the lower bound:
\begin{eqnarray}
E[T]&=& \sum_{n=1} \left( \sum_{m=1}^{l} ((n-1)l +m)Pr[T=((n-1)l +m)]\right)\nonumber \\
&\geq&\sum_{n=1} (n-1)*l\left(\sum_{m=1}^{l}Pr[T=((n-1)l +m)]\right) \nonumber \\
 &\geq& l\sum_{n=1} n Pr[ (n-1)l< T \leq n*l] - l \nonumber \\
&\geq& l(E^{(l)}[T]-1)
\end{eqnarray}
\end{enumerate}
\end{proof}

\begin{lemma}\label{Atime}The expected hitting time from the vertex $\bm{b}$ to $\bm{0}$ is bounded  by 
\begin{equation}
k\leq E[T_{\bm{0}\bm{b}}]\leq k(1+d/r)^{k}
\end{equation}
\end{lemma}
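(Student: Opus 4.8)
The plan is to prove the two inequalities separately: the lower bound $E[T_{\mathbf 0\mathbf b}]\ge k$ by a deterministic level-counting argument, and the upper bound by feeding the monotone-descent estimate of Lemma~\ref{Qbounds}(2) into a renewal (geometric-trials) argument over blocks of $k$ steps.

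For the lower bound I would use that, in the modified graph of the Appendix, every transition flips exactly one bond, so the number of broken bonds (the ``level'') changes by exactly $\pm 1$ at each step, and that the collapsed vertex $\mathbf b$ is joined only to states with $k-1$ broken bonds. Hence a walk started at $\mathbf b$ spends one step to land at level $k-1$ and then needs at least $k-1$ further steps to descend to level $0$, which consists of the single vertex $\mathbf 0$. Thus $T_{\mathbf 0\mathbf b}\ge k$ with probability one, giving $E[T_{\mathbf 0\mathbf b}]\ge k$.

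For the upper bound the key claim is a uniform one: from any vertex of $\bar V$ other than $\mathbf 0$, the walk hits $\mathbf 0$ within $k$ steps with probability at least $q:=(1+d/r)^{-k}$. If the current state has $1\le n\le k-1$ broken bonds, the monotone-descent path underlying Lemma~\ref{Qbounds}(2) reaches $\mathbf 0$ in exactly $n\le k$ steps with probability $\ge (1+d/r)^{-n}\ge q$; if the current vertex is $\mathbf b$, one step lands (with probability one) at level $k-1$, and the same estimate gives probability $\ge (1+d/r)^{-(k-1)}\ge q$ within the remaining $k-1$ steps; and since $\{\mathbf 0\}$ is all of level $0$, conditioning on the event that $T_{\mathbf 0\mathbf b}$ has not yet occurred excludes the walk being at $\mathbf 0$, so these two cases are exhaustive. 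Applying the claim at the start of each block $[jk,(j+1)k)$ together with the Markov property yields $\Pr[T_{\mathbf 0\mathbf b}>jk]\le (1-q)^{j}$, hence $E[T_{\mathbf 0\mathbf b}]=\sum_{t\ge 0}\Pr[T_{\mathbf 0\mathbf b}>t]\le k\sum_{j\ge 0}(1-q)^{j}=k/q=k(1+d/r)^{k}$.

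The main obstacle is the bookkeeping around the super-vertex $\mathbf b$: it does not sit at a single level, so it must be handled as a special case (one step absorbed to reach level $k-1$), and one must check that the per-block success probability is bounded below \emph{uniformly} over every state the walk could occupy at the end of a failed block — which is precisely why the descent estimate of Lemma~\ref{Qbounds}(2), combined with the fact that level $0$ is exactly $\{\mathbf 0\}$, is the right ingredient. Once that uniform bound is in hand, the remainder is a routine geometric-series summation.
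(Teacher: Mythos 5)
Your proposal is correct and follows essentially the same route as the paper: a trivial level-counting lower bound, and for the upper bound a uniform estimate that from any occupied state the walk reaches $\bm{0}$ within $k$ steps with probability at least $(1+d/r)^{-k}$ (via Lemma~\ref{Qbounds}(2)), followed by a geometric block argument over blocks of length $k$. Your direct tail-sum $E[T]\leq k\sum_{j\geq 0}\Pr[T>jk]$ is just an unpacked version of the paper's use of Lemma~\ref{Qbounds}(3) together with $\Pr[T_{\bm{0}\bm{b}}>nk]\leq(1-p)^{n}$, and your explicit one-step treatment of the collapsed vertex $\bm{b}$ makes precise a step the paper leaves implicit.
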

\begin{proof}The lower bound is trivial since at least $k$ bonds must be repaired before any disconnected state can reach $\bm{0}$. Consider the $n*k$ step probability for transition from $\bm{b}$ to $\bm{0}$. Let $\tilde{Q}$ be the transition matrix restricted to the set $\tilde{V}=\bar{V}-\{\bm{0}\}$, i.e., 
\begin{equation}
\tilde{Q}_{\bm{i},\bm{j}}=Q_{\bm{i},\bm{j}}\left(1-\delta_{\bm{0}\bm{j}}-\delta_{\bm{i}\bm{0}} + \delta_{\bm{0}\bm{j}}\delta_{\bm{i}\bm{0}}\right)
\end{equation}
The probability of a trajectory starting at $\bm{i}\in \tilde{V}$ reaching $\bm{0}$ in $k$ steps or less is given by 
\begin{eqnarray}
Pr[T_{\bm{0}\bm{i}}\leq k]&=& 1-\sum_{\bm{j}\in\tilde{V}}\tilde{Q}_{\bm{j},\bm{i}}^{k}=\sum_{n=1}^{k}\left(\sum_{l\in\tilde{V}}
Q_{\bm{0},\bm{l}}\tilde{Q}_{\bm{l},\bm{i}}^{n-1}\right)\nonumber \\
&\geq& Q_{\bm{0},\bm{b}}^{k}>(1+d/r)^{-k}
\end{eqnarray}
where we have used Lemma \ref{Qbounds} part (2). Let us define $p= (1+d/r)^{-k}$. In terms of $p$, the previous inequality and Lemma \ref{Qbounds} part (3) imply
\begin{eqnarray}
Pr[T_{\bm{0}\bm{b}}> n*k]&=& 1- Pr[T_{\bm{0}\bm{b}}\leq n*k]\leq(1-p)^{n}\nonumber \\
\Rightarrow E^{(k)}[T_{\bm{0}\bm{b}}] &\leq& \sum_{n=1}^{\infty}(1-p)^{n}=1/p\nonumber \\
E^{(k)}[T_{\bm{0}\bm{b}}] &\leq& (1+d/r)^{k} \nonumber \\
\Rightarrow E[T_{\bm{0}\bm{b}}] &\leq& k(1+d/r)^{k}
\end{eqnarray}
\end{proof}
An immediate consequence of the previous lemma is that for $k$ even
\begin{equation}
k/2 \leq E^{(2)}[T_{\bm{0}\bm{b}}] \leq (k/2) (1+d/r)^{k} + 1
\end{equation}
Similarly, if $k$ is odd, using the fact that $Pr[T_{\bm{0}\bm{b}}<T_{\hat{\bm{\mu}}\bm{b}}]=0$ we get
\begin{equation}
\frac{k-1}{2}\leq E^{(2)}[T_{\hat{\bm{\mu}}\bm{b}}] \leq E[T_{\bm{0}\bm{b}}]/2 + 1 \leq \frac{k}{2}(1+d/r)^{k} + 1
\end{equation}
 Let us define the equilibrium probability for $\bm{b}$ as $\tilde{\pi}_{\bm{b}}$, then
\begin{eqnarray}
\tilde{\pi}_{\bm{b}}&=& \frac{\pi_{\bm{b}}}{\sum_{\bm{i}\in V_{even}} \pi_{\bm{i}}} \textrm{ if {\it k} is even}\nonumber \\
&=& \frac{\pi_{\bm{b}}}{\sum_{\bm{i}\in V_{odd}} \pi_{\bm{i}}} \textrm{  if {\it k} is odd}
\end{eqnarray}
We can finally compute upper ($U$) and lower ($L$) bounds on the hitting time $T_{\bm{b}\bm{0}}$ which are asymptotically equivalent in the limit $r\rightarrow\infty$. The following theorem implies that $\Delta(r)\equiv U(r)-L(r)$ is monotonically decreasing in $r$ and $\lim_{r\rightarrow\infty} \Delta(r)/L(r) =0$~\cite{BenderOrszag}.

\begin{theorem}The expected number of SSA steps before first passage on a $k$-connected graph is bounded within
\begin{equation}
2\frac{ (1-d/r)^{-1} - \left((k-1)/2\right)\tilde{\pi}_{\bm{b}}}{\tilde{\pi}_{\bm{b}}}\geq E[T_{\bm{b}\bm{0}}] \geq  2\frac{1 - (k/2(1 + d/r)^{k}+2)\tilde{\pi}_{\bm{b}}}{\tilde{\pi}_{\bm{b}}}
\end{equation}
\end{theorem}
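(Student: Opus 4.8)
The plan is to obtain the two‑sided bound from the cover‑time identity of Lemma~\ref{Ctime}, applied to the ergodic two‑step chain, after inserting the hitting‑time and transition‑probability estimates already assembled in Lemmas~\ref{Qbounds}, \ref{Atime} and the corollaries stated just before the theorem. Since the SSA chain $Q$ is periodic in the parity of the number of broken bonds, I would work throughout with $Q^2$ restricted to the parity class ($V_{even}$ if $k$ is even, $V_{odd}$ if $k$ is odd) containing $\bm{b}$, writing $\tilde{\pi}$ for its renormalized stationary law and $E^{(2)}[\cdot]$, $\Pr^{(2)}[\cdot]$ for expectations and probabilities under it. Lemma~\ref{Qbounds}(3) with $l=2$ then gives the sandwich
\begin{equation}
2\bigl(E^{(2)}[T_{\bm{b}\bm{0}}]-1\bigr)\ \le\ E[T_{\bm{b}\bm{0}}]\ \le\ 2\,E^{(2)}[T_{\bm{b}\bm{0}}],
\end{equation}
so the whole problem reduces to bounding $E^{(2)}[T_{\bm{b}\bm{0}}]$; in the odd case one first strips off the single forced step $\bm{0}\to\hat{\bm{\mu}}$, so that the relevant quantity becomes $E^{(2)}[T_{\hat{\bm{\mu}}\bm{b}}]$, which is precisely the object for which the corollary to Lemma~\ref{Atime} records $\tfrac{k-1}{2}\le E^{(2)}[T_{\hat{\bm{\mu}}\bm{b}}]\le \tfrac{k}{2}(1+d/r)^{k}+1$ (using $\Pr[T_{\bm{0}\bm{b}}<T_{\hat{\bm{\mu}}\bm{b}}]=0$).

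Applying Lemma~\ref{Ctime} to $Q^2$ with the pair $(\bm{0},\bm{b})$ gives $E^{(2)}[T_{\bm{b}\bm{0}}]+E^{(2)}[T_{\bm{0}\bm{b}}]=\bigl(\tilde{\pi}_{\bm{b}}\,p_{\mathrm{esc}}\bigr)^{-1}$, where $p_{\mathrm{esc}}:=\Pr^{(2)}[T_{\bm{b}\bm{b}}>T_{\bm{0}\bm{b}}]$ is the probability that, launched from the collapsed disconnected vertex $\bm{b}$, the walk reaches $\bm{0}$ before it re‑enters the disconnected set; hence $E^{(2)}[T_{\bm{b}\bm{0}}]=\tilde{\pi}_{\bm{b}}^{-1}p_{\mathrm{esc}}^{-1}-E^{(2)}[T_{\bm{0}\bm{b}}]$. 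The \emph{lower} bound on $E[T_{\bm{b}\bm{0}}]$ is then immediate: use $p_{\mathrm{esc}}\le 1$, the upper estimate $E^{(2)}[T_{\bm{0}\bm{b}}]\le \tfrac{k}{2}(1+d/r)^{k}+1$, and the left inequality of the sandwich, whereupon the two ``$-1$''s combine into the ``$+2$'' appearing in the claim. For the \emph{upper} bound, use $p_{\mathrm{esc}}\ge 1-d/r$ (so $p_{\mathrm{esc}}^{-1}\le (1-d/r)^{-1}$), the lower estimate $E^{(2)}[T_{\bm{0}\bm{b}}]\ge \tfrac{k-1}{2}$, and the right inequality. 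Simplifying each side gives exactly the stated bounds, and monotonicity of $\Delta(r)=U(r)-L(r)$ together with $\Delta(r)/L(r)\to 0$ as $r\to\infty$ follows by inspection, since $(1-d/r)^{-1}-1$, $(1+d/r)^{k}-1$ and $\tilde{\pi}_{\bm{b}}$ all tend to $0$.

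The main obstacle is the escape‑probability estimate $p_{\mathrm{esc}}\ge 1-d/r$: the crude bound from the single monotone descent path, which by Lemma~\ref{Qbounds}(2) gives only $p_{\mathrm{esc}}\ge (1+d/r)^{-(k-1)}$, is too weak for $k\ge 3$. Instead I would track the number $m_t$ of broken bonds, which --- the chain being bipartite --- performs $\pm1$ steps and equals $k-1$ after the first move out of $\bm{b}$. A return to $\bm{b}$ requires reaching a disconnecting configuration, which forces an up‑step out of level $k-1$; hence $1-p_{\mathrm{esc}}$ is at most the probability that this level walk, started at $k-1$, performs an up‑step from level $k-1$ before hitting level $0$. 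Lemma~\ref{Qbounds}(1), together with the total‑repair‑probability bound $\ge mr/(mr+d)$ implicit in the proof of part~(2), shows that at level $m\ge 1$ the up‑probability is at most $d/(mr)$ and the down‑probability at least $mr/(mr+d)$; a first‑step analysis of the expected number of visits to level $k-1$ before absorption at $0$ then yields $1-p_{\mathrm{esc}}\le d/r$. The delicate point --- and where the calculation needs the most care --- is controlling that expected number of returns to level $k-1$: it must be shown to remain essentially $1$ because the walk at levels below $k-1$ is so strongly down‑biased in the regime $d/r<1$ where the stated bounds (involving $(1-d/r)^{-1}$) are meaningful, which is cleanest via an inductive bound on the hitting probabilities $h_m=\Pr[\text{level walk from }m\text{ reaches }k-1\text{ before }0]$.
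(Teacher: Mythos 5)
Your proposal follows the paper's own proof essentially step for step: the two-step sandwich $2\bigl(E^{(2)}[T]-1\bigr)\le E[T]\le 2E^{(2)}[T]$ from Lemma~\ref{Qbounds}(3), the cover-time identity of Lemma~\ref{Ctime} applied to the two-step chain on the collapsed graph for the pair $(\bm{0},\bm{b})$, the corollary bounds $\frac{k-1}{2}\le E^{(2)}[T_{\bm{0}\bm{b}}]\le \frac{k}{2}(1+d/r)^{k}+1$ derived from Lemma~\ref{Atime}, the odd-$k$ case handled through the forced first step $\bm{0}\to\hat{\bm{\mu}}$, and the same bookkeeping that produces the ``$+2$'' and the ``$(k-1)/2$'' in the final display. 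The one genuine divergence is how the escape estimate $\Pr[T_{\bm{b}\bm{b}}>T_{\bm{0}\bm{b}}]\ge 1-d/r$ is certified: the paper disposes of it in one line by summing the geometric series $\sum_{n\ge 1}\bigl(d/(r+d)\bigr)^{n}=d/r$, while you project onto the broken-bond count and dominate it by a birth--death level walk, noting that a return to $\bm{b}$ (all of whose states have at least $k$ broken bonds) forces an up-step from level $k-1$ before level $0$ is reached. Your sketch does close, and in fact more cleanly if you use the up-probability bound $p_m\le d/(mr+d)$ (the complement of the down-probability bound $mr/(mr+d)$ you already cite, slightly sharper than the $d/(mr)$ you take from Lemma~\ref{Qbounds}(1)): writing $g_m$ for the probability that the dominating walk started at level $m$ steps up from $m$ before hitting $0$, first-step analysis gives $g_m=p_m/\bigl(1-(1-p_m)g_{m-1}\bigr)$ with $g_1=p_1$, and the hypothesis $g_{m-1}\le d/r$ propagates because it reduces to $m(1-g_{m-1})+d/r\ge 1$; hence $1-p_{\mathrm{esc}}\le g_{k-1}\le d/r$ for all $r,d$, which is exactly the input the cover-time step needs. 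So the content is the same; your treatment of this one inequality is more laborious than the paper's, but also more explicit about the uniform-in-state domination that the paper's geometric-series bound leaves implicit.
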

\begin{proof} In order to apply lemma \ref{Ctime} to bound the hitting time we need to look at graphs with $k$ odd or even separately. If $k$ is even we can apply lemma \ref{Ctime} directly to $C_{\bm{0}\bm{b}}$ for the 2-step chain $Q^{2}_{even}$. However, if $k$ is odd, we need to consider the cover time between $\bm{b}$ and each state $\hat{\bm{\mu}}$ with exactly one broken bond. Then, using the fact $Q|\bm{0}\rangle =(1/\sum_{\nu}b_{\nu}) \sum_{\mu}b_{\mu}|\hat{\bm{\mu}}\rangle$, we get:
\begin{eqnarray}\label{kodd}
E[T_{\bm{b}\bm{0}}] &=& 1 + \frac{1}{\sum_{\nu}b_{\nu}}\sum_{\mu}b_{\mu}E[T_{\bm{b}\hat{\bm{\mu}}}]
\end{eqnarray}

Since $Pr[T_{\bm{b}\bm{b}}>T_{\bm{0}\bm{b}}] = \sum_{n>m} Pr[T_{\bm{b}\bm{b}}=n]Pr[T_{\bm{0}\bm{b}}=m]$, for the $k$-step chain discussed in lemma \ref{Atime} we get:
\begin{eqnarray}
Pr[T_{\bm{b}\bm{b}}<T_{\bm{0}\bm{b}}] &\leq& \sum_{n=1}^\infty \left(\frac{d}{r(1+d/r)}\right)^{n} = d/r\nonumber \\
\Rightarrow Pr[T_{\bm{b}\bm{b}}>T_{\bm{0}\bm{b}}] &\geq& 1 - d/r
\end{eqnarray}
Also, $Pr[T_{\bm{b}\bm{b}}>T_{\hat{\bm{\mu}}\bm{b}}] \geq Pr[T_{\bm{b}\bm{b}}>T_{\bm{0}\bm{b}}] \geq 1 - d/r$.
 Suppose $k$ is even.  Then we can estimate the cover time $C_{\bm{0}\bm{b}} = T_{\bm{b}\bm{0}} + T_{\bm{0}\bm{b}}$ using lemma \ref{Ctime}.
\begin{eqnarray}
E[T_{\bm{b}\bm{0}}]&\geq& 2*E^{(2)}[T_{\bm{b}\bm{0}}]-2 =2*\left( \frac{1}{\tilde{\pi}_{\bm{b}}Pr[T_{\bm{b}\bm{b}}>T_{\bm{0}\bm{b}}]} - E^{(2)}[T_{\bm{0}\bm{b}}] -1\right) \nonumber \\
&\leq& 2*E^{(2)}[T_{\bm{b}\bm{0}}] =2*\left( \frac{1}{\tilde{\pi}_{\bm{b}}Pr[T_{\bm{b}\bm{b}}>T_{\bm{0}\bm{b}}]} - E^{(2)}[T_{\bm{0}\bm{b}}] \right)
\end{eqnarray}
An analogous argument for odd $k$ on using Eq. \ref{kodd} gives,
\begin{eqnarray}
E[T_{\bm{b}\bm{0}}]&\geq& 1 + \frac{1}{\sum_{\nu}b_{\nu}}\sum_{\mu}b_{\mu}\left(\frac{2}{\tilde{\pi}_{\bm{b}}Pr[T_{\bm{b}\bm{b}}>T_{\hat{\bm{\mu}}\bm{b}}]}
-2*E^{(2)}[T_{\hat{\bm{\mu}}\bm{b}}]-2\right)\nonumber \\
&\leq& 1 + \frac{1}{\sum_{\nu}b_{\nu}}\sum_{\mu}b_{\mu}\left(\frac{2}{\tilde{\pi}_{\bm{b}}Pr[T_{\bm{b}\bm{b}}>T_{\hat{\bm{\mu}}\bm{b}}]}
-2*E^{(2)}[T_{\hat{\bm{\mu}}\bm{b}}]\right)
\end{eqnarray}
Finally, using lemma \ref{Ctime} and \ref{Atime} we get for all $k$,
\begin{eqnarray}
E[T_{\bm{b}\bm{0}}] &\geq & 2\frac{1 - \left(k/2(1 + d/r)^{k}+2\right)\tilde{\pi}_{\bm{b}}}{\tilde{\pi}_{\bm{b}}}\nonumber \\
&\leq & 2\frac{ (1-d/r)^{-1} - \left((k-1)/2 \right)\tilde{\pi}_{\bm{b}}}{\tilde{\pi}_{\bm{b}}}
\end{eqnarray}

\end{proof}

As a corollary to the preceding theorem we get the result stated in section ~\ref{BondNet}.

\noindent
{\bf Theorem~\ref{SSAcon}} The expected number of SSA steps required to break a $k$-connected network with $k>1$ and $r>1$ is $\Omega(r^{k-1})$.

\begin{proof}
 Let $\bm{i}$ and $\bm{j}=\bm{i} + \hat{\bm{\mu}} + \hat{\bm{\nu}}$ be
 two graphs with $c$ and $c+2$ bonds broken respectively. Since we are
 interested in computing the invariant distribution for the
 irreducible components $Q_{even}^{2}$ and $Q_{odd}^{2}$, we first
 compute each matrix element connecting $\bm{i}$ to $\bm{j}$:
\begin{eqnarray}
Q^{2}_{\bm{j},\bm{i}}&=& \sum_{p=\mu,\nu}Q_{\bm{j},\bm{i}+\hat{\bm{p}}}Q_{\bm{i}+\hat{\bm{p}},\bm{i}}\nonumber \\
&=& \sum_{p=\mu,\nu}\left(\frac{b_{\mu}b_{\nu}}{(\sum_{\alpha} a_{\alpha}(i_{\alpha}+\delta_{p\alpha}) + (1- i_{\alpha}-\delta_{p\alpha})b_{\alpha})(\sum_{\beta} a_{\beta}i_{\beta} + (1- i_{\beta})b_{\beta})}\right)\nonumber \\
&=& \frac{b_{\mu}b_{\nu}}{W_{\bm{i},\bm{i}}}\left(\frac{1}{W_{\bm{i}+\hat{\bm{\mu}},\bm{i}+\hat{\bm{\mu}}}} + \frac{1}{W_{\bm{i}+\hat{\bm{\nu}},\bm{i}+\hat{\bm{\nu}}}}\right)
\end{eqnarray}
similarly,
\begin{eqnarray}
Q^{2}_{\bm{i},\bm{j}}&=& \sum_{p=\mu,\nu}Q_{\bm{i},\bm{j}-\hat{\bm{p}}}Q_{\bm{j}-\hat{\bm{p}},\bm{j}}\nonumber \\
&=& \sum_{p=\mu,\nu}\left(\frac{a_{\mu}a_{\nu}}{(\sum_{\alpha} a_{\alpha}(j_{\alpha}-\delta_{p\alpha}) + (1- j_{\alpha}+\delta_{p\alpha})b_{\alpha})(\sum_{\beta} a_{\beta}j_{\beta} + (1- j_{\beta})b_{\beta})}\right) \nonumber \\
&=& \frac{a_{\mu}a_{\nu}}{W_{\bm{j},\bm{j}}}\left(\frac{1}{W_{\bm{i}+\hat{\bm{\mu}},\bm{i}+\hat{\bm{\mu}}}} + \frac{1}{W_{\bm{i}+\hat{\bm{\nu}},\bm{i}+\hat{\bm{\nu}}}}\right)
\end{eqnarray}
Detailed balance then implies that
\begin{eqnarray}
\frac{\pi_{\bm{j}}}{\pi_{\bm{i}}} &=& \frac{Q^{2}_{\bm{j},\bm{i}}}{Q^{2}_{\bm{i},\bm{j}}} = \frac{b_{\mu}}{a_{\mu}}\frac{b_{\nu}}{a_{\nu}}\frac{W_{\bm{j},\bm{j}}}{W_{\bm{i},\bm{i}}} \nonumber \\
&=& \frac{b_{\mu}}{a_{\mu}}\frac{b_{\nu}}{a_{\nu}}\left( 1 + \frac{(a_{\mu} + a_{\nu}) - (b_{\mu}  + b_{\nu})}{-W_{\bm{i},\bm{i}}}\right)\nonumber\\
&\leq& \frac{c+2}{c}*r^{-2} \textrm{ if } c\neq0
\end{eqnarray}
Since $\pi_{\hat{\bm{\mu}}} = \pi_{\bm{0}}\frac{b_{\mu}(a_{\mu} +
\sum_{\nu\neq\mu}b_{\nu})}{(b_{\mu}
+\sum_{\nu\neq\mu}b_{\nu})a_{\mu}}<\pi_{\bm{0}}$ we can deduce that
for any state $\bm{i}$ with $c$ bonds broken, with $k-1 \geq c \geq 1$, the
invariant probability $\pi_{\bm{i}} \leq c*r^{-c+1}\pi_{\bm{0}}$. Let,
$\bm{l}$ be the state with $k-1$ bonds broken for which
$\pi$ is maximized. The choice of matrix elements imposed by detailed balance
implies $Q_{\bm{l},\bm{b}} \geq 1/(^{d}_{k-1})$. Also, since lemma
\ref{Qbounds} implies $\sum_{\mu} \pi_{\hat{\bm{\mu}}}\leq
\pi_{\bm{0}}(1+d/r)$ we get for all values of $k$: \begin{eqnarray}
\frac{\pi_{\bm{b}}}{\pi_{\bm{l}}}&=&
\frac{Q_{\bm{b},\bm{l}}}{Q_{\bm{l},\bm{b}}}\leq
\frac{(d-k+1)(^{d}_{k-1})}{(k-1)r}\nonumber \\ \Rightarrow
\tilde{\pi}_{\bm{b}}&\leq& \frac{\pi_{\bm{b}}}{\pi_{\bm{0}}} \leq
(d-k+1)(^{d}_{k-1})r^{-k+1}
\end{eqnarray}
Finally, using the lower bound on $E[T_{\bm{b}\bm{0}}]$ computed in preceding theorem we get
\begin{eqnarray}
E[T_{\bm{b}\bm{0}}]&\geq & 2\frac{1 - \left(k/2(1 + d/r)^{k}+2\right)\tilde{\pi}_{\bm{b}}}{\tilde{\pi}_{\bm{b}}}\nonumber \\
&\geq& P(d,k)*r^{k-1} \textbf{ $\forall$ } r>r_{0}
\end{eqnarray}
where $P(d,k)=\frac{2}{(d-k+1)(^{d}_{k-1})}\left(1-(1+(k*2^{k-2})^{-1})\frac{k(d-k+1)(^{d}_{k-1})}{(2/d)^{k-1}}\right)$ and $r_{0}=d$. 
\end{proof}

\end{document}